\documentclass[
pra, twocolumn,
superscriptaddress,
 amsmath,amssymb,
 aps, 
floatfix
]{revtex4-2}

\usepackage{times}

\usepackage{amsfonts}
\usepackage{amssymb}
\usepackage{graphicx}
\usepackage{bbold}

\usepackage{accents}
\usepackage{graphicx}
\usepackage{gensymb} 
\usepackage{dcolumn}
\usepackage{bm}
\usepackage{amsmath}
\usepackage{graphicx}
\usepackage{yfonts}
\usepackage{float}
\usepackage{mathtools}
\usepackage{comment}
\DeclareMathOperator{\tr}{Tr}

\usepackage[utf8]{inputenc}
\usepackage{amsthm}
\newtheorem{theorem}{Theorem}

\newtheorem{corollary}{Corollary}[theorem]
\newtheorem{lemma}{Lemma}
\newtheorem{definition}{Definition}

\usepackage{notes2bib}
\bibnotesetup{
note-name = ,
use-sort-key = false
}

\newcommand{\bra}[1]{\left\langle #1 \right|}
\newcommand{\ket}[1]{\left| #1 \right\rangle}

\newcommand{\ketbra}[2]{\left|#1\middle\rangle\middle\langle#2\right|}

\usepackage[normalem]{ulem}
\usepackage{color}

\newcommand{%
  \immediate\write18{texcount -1 -sum -merge .tex > -words}%
  \input{-words}words%
}[1]{%
  \immediate\write18{texcount -1 -sum -merge #1.tex > #1-words}%
  \input{#1-words}words%
}

\usepackage {hyperref}
\hypersetup{
     colorlinks   = true,
     linkcolor    = blue,
     citecolor    = blue,
     urlcolor     = blue,
     }
\makeatother

\begin{document} 
\preprint{APS/123-QED}

\title{Classical communication through quantum causal structures}%
\author{Kaumudibikash Goswami}

\email{k.goswami@uq.edu.au}
\author{Fabio Costa}%
\email{f.costa@uq.edu.au}
\affiliation{ 
 Australian Research Council Centre of Excellence for Engineered Quantum Systems, School of Mathematics and Physics,\\ University of Queensland, QLD 4072 Australia
}%
\begin{abstract}
Quantum mechanics allows operations to be in indefinite causal order. Recently there have been active discussions on enhanced communication strategies through exotic causal structures. In light of this, through the process matrix formalism, we formulate different classical capacities for a bi-partite quantum process. We find that a one-way communication protocol through an arbitrary process cannot outperform a causally separable process, i.e., we can send at most one bit per qubit. Next, we study bi-directional communication through a causally separable process. Our result shows, a bi-directional protocol cannot exceed the limit of one way communication protocol. Finally, we generalise this result to multi-party broadcast communication protocol through a definite ordered process.
\end{abstract}
\maketitle

\section{Introduction}
One of the key questions in quantum information is the rate at which a quantum channel can transmit classical information, as quantified by the \emph{classical capacity} of the channel \cite{Bennet98, Wilde13}. Holevo's seminal result \cite{Holevo73}, and following work \cite{Schumacher97, Holevo98}, provide upper bounds on the classical capacity, showing that each qubit can communicate at most one bit of classical information.

In a typical quantum communication protocol, the parties act in a fixed order. However, more general situations are possible, where causal order might be uncertain or even not defined. A practical example can be a distributed system, such as the internet, where different nodes communicate with each other. In such systems, local clocks can suffer from random errors and delays, leading to uncertainty in the ordering of the events \cite{Lamport78}. Even more radically, recent developments have shown possibilities of \emph{indefinite} causal structure, i.e., scenarios where the lack of order between the parties cannot be reduced to classical ignorance \cite{chiribella09, oreshkov12}. From a foundational point of view, this is relevant, for example, in quantum gravity scenarios, where quantum superposition of spacetimes can result in an indefinite causal order of events \cite{hardy2007towards,Hardy_2009_gravity_computer, zych2017bell}. Pragmatically, quantum control of causal order has been proposed as a resource for computation and communication \cite{chiribella09, araujo14, Guerin2016,Ara_jo_2017,Ebler_2018,chiribella2018indefinite,salek2018quantum, Ding2019, Caleffi_2020}, with several experimental implementations already performed \cite{Procopio_2015,rubino2017, Goswami_2018,goswami2018communicating, Wei_2019,Guo_2020,taddei2020experimental}.

In light of the foundational and applied relevance, it is important to understand how general quantum causal structure affects classical communication. In particular, one may wonder whether an indefinite causal structure can augment the classical communication capacity and possibly exceed the Holevo bound \cite{Holevo98,Schumacher97}. However, despite much work on various communication protocols, the notion of classical capacities in situations where the communicating parties themselves are indefinitely ordered has not yet been developed.

We address this gap through the process matrix formalism \cite{oreshkov12}. We develop expressions for the asymptotic capacity of a process, under different encoding and decoding settings, reducing to analogue expressions for quantum channels. We explore one-way communication protocols through an arbitrary process and show that such scenarios cannot exceed communication in definite causal order, i.e., we can send at most one bit per exchanged qubit. We also explore two-party communication protocols when causal order is definite but unknown (probabilistic). In such situations, the total bi-directional communication cannot exceed the maximum one-way communication---again, at most one bit per qubit in either direction. This extends to a similar bound for communication between multiple parties in a definite (but possibly probabilistic) causal order.

We present the work following way. In Section~\ref{sec:2}, we give an introduction to classical communication through quantum channels. In Section~\ref{sec:3}, we introduce the framework of the process matrix. In Section~\ref{sec:4}, we introduce asymptotic setting for processes, subsequently we define classical capacities of a process, and developed a bound for one way communication. In Section~\ref{sec:5} we develop a bound for  bi-directional communication protocol. We then generalise the bound for a multi-party broadcast communication protocol.

\section{ Classical communication through a quantum channel}
\label{sec:2}

Let us first review how one can use ordinary quantum channels to send classical information \cite{Wilde13, Gyongyosi2018}. In a one-way communication protocol, Alice has a classical message $m$, prepared according to some probability distribution $P(m)$, and encodes it into a quantum state $\rho_m$. She then sends it to Bob through a noisy quantum channel $\mathcal{N}$. Upon receiving the state, Bob extracts the message by using a positive operator valued measure (POVM) $\left\{E_{m'}\right\}_{m'}$, where $E_{m'}\geq 0$, $\sum_{m'} E_{m'}=\mathbb{1}$.  Here ${m'}$ denotes the measurement outcome. The conditional probability of Bob receiving a message $m'$ given that Alice sends the message $m$ is

\begin{align}
    p(m'|m) = \tr \left[ E_{m'}\mathcal{N}\left(\rho_m\right)\right].
    \label{Born}
\end{align}
The probability of error for a particular message $m$ is
\begin{align}
p_e(m) = p({m'} {\neq} m{|}m)  = 1 -p(m| m). \end{align}
The goal of the protocol is to minimise the maximal probability of error $p^*_e:=\max_{m} p_e(m)$. An \emph{asymptotic setting} is a scenario where Alice can use $n$ copies of the channel to send a $k$-bit message $m\in\{0,1\}^k$, where both $k$ and $n$ can be arbitrarily large. In other words, she encodes $k$ bits into an $n\geq k$ -bit message $X_m^{(n)}\in\{0,1\}^n$ and subsequently an $n$-qubit state $\rho_m^{(n)}$ and then sends each qubit through an independent copy of the channel. The classical capacity of the quantum channel $\mathcal{N}$, is defined as the maximal rate $C = k/n$ such that asymptotically, $n \rightarrow{\infty}$, one can achieve noiseless communication, $p^*_e\rightarrow{0}$ \cite{Shannon48, Wilde13}.

Different encoding and decoding strategies can lead to different asymptotic settings resulting in different classical communication capacities for a channel $\mathcal{N}$, which we review below.
 A quantification of classical communication possible through a channel $\mathcal{N}$ is given by the \emph{Holevo quantity} \cite{Holevo73}, defined as 
 
\begin{align} \label{holevochi}
  \chi(\mathcal{N}) :=  \underset{p(m), \rho_m}{{\mathrm{max}}} \ &S\left(\sum p(m) \mathcal{N}( \rho_m)\right) \nonumber \\ 
  &- \sum p(m)S\left(\mathcal{N}(\rho_m)\right).
\end{align}
Here $S(.)$ is the von Neumann entropy. Having introduced the Holevo quantity for a channel, it is interesting to see how this quantity is related to different classical capacities corresponding to different asymptotic configurations of channels. We discuss it below.

 \emph{Product encoding - Product decoding: } When the input quantum state is a product state of the form $\rho_{m}^{(n)} {=} \otimes_{i=1}^n \rho^{(i)}_m$ and the measurement operation is $E_{m'}^{(n)} {=} \otimes_{i=1}^n E_{m'}^{(i)}$ with each $E_{m'}^{(i)}$ acting on the qubit $\mathcal{N}(\rho^{(i)}_m)$. Let us consider, the measurement result produces an $n$-bit string $Y_{m'}^{(n)} \in \{0,1\}^n$ corresponding to the message $m'$. In the asymptotic setting, $n{\rightarrow}{\infty}$, the capacity in this setup is given by the conventional definition of classical capacity obtained by maximising the regularised mutual information, $I(Y_{m'}^{(n)}{:}X _m^{(n)})/n$, between Alice's input and Bob's output over the input probability distribution, the encoded quantum states and decoding measurement operators. The central idea of Shannon's capacity formula is that the mutual information $I(Y_{m'}^{(n)}{:}X _m^{(n)})$ is additive. Thus, the corresponding capacity is called \emph{one-shot capacity}, $C^{(1)}(\mathcal{N})$. This capacity is determined by the single use of the channel $\mathcal{N}$, with the optimised mutual information $I(Y_{m'}^{(1)}{:}X _m^{(1)})$, corresponding to the single-copy input and output variables $X _m^{(1)}$ and $Y_{m'}^{(1)}$ respectively, i.e.

\begin{align}
    C^{(1)}(\mathcal{N})&= \lim_{n{\rightarrow}{\infty}}\underset{p(m), \rho_m^{(n)}, E_{m'}^{(n)}}{{\mathrm{max}}}  \frac{I(Y_{m'}^{(n)}{:}X _m^{(n)})}{n} \nonumber \\
&=\underset{p(m), \rho_m^{(1)}, E_{m'}^{(1)}}{{\mathrm{max}}}I(Y_{m'}^{(1)}{:}X_{m}^{(1)} ).
    \label{accessible_information}
\end{align}
 \emph{Holevo's theorem} \cite{Holevo73} states that this quantity is upper bounded by
\begin{equation} 
    C^{(1)}(\mathcal{N}) \leq \chi(\mathcal{N}),
    \label{Eq:one-shot_and_holevo}
\end{equation}
with the $\chi(\mathcal{N})$ defined in Eq.~\eqref{holevochi}. Hereinafter, for the sake of clarity, we are going to represent $I(Y_{m'}^{(1)}{:}X_{m}^{(1)} )$ with $I({m'}{:}{m})$. 

 \emph{Product encoding - Joint decoding: } The difference with the previous case is that Bob can perform a joint measurement on the $n$-qubit system. The \emph{Holevo-Schumacher-Westmoreland (HSW) capacity}, $C(\mathcal{N})$, associated with this strategy turns out to be greater than one shot capacity, i.e. $C(\mathcal{N}){\geq} C^{(1)}(\mathcal{N})$, an effect known as \emph{super-additivity}. The HSW capacity is simply equal to the Holevo quantity \cite{Holevo98, Schumacher97}:
\begin{align}
    C(\mathcal{N}) {=} \chi(\mathcal{N}),
    \label{Holevo_capacity}
\end{align}
where $S(.)$ denotes the von Neumann entropy.

 \emph{Joint encoding - Joint decoding: } Here, Alice uses an entangled $n$-qubit state to encode the information and Bob performs a joint measurement on his output. The capacity associated with this strategy is given by \emph{regularised Holevo quantity} \cite{Holevo73,Holevo98,Schumacher97}: 
\begin{align}
C^E(\mathcal{N}) {=} \chi_\mathrm{reg}(\mathcal{N}) ,
\label{Eq:joint_encoding_capacity_channel}
\end{align}
with
\begin{align}
\chi_\mathrm{reg}(\mathcal{N}) {=} \lim_{n\to\infty} \frac{\chi(\mathcal{N}^{\otimes n})}{n}.    
\label{Regularised_Holevo}
\end{align}
It has been shown in \cite{Hastings_2009} that the capacity $C^E(\mathcal{N})$ in this case can be even greater than the HSW capacity --- $C^E(\mathcal{N}) \geq C(\mathcal{N})\geq C^{(1)}(\mathcal{N})$.

The Holevo quantity $\chi(\mathcal{N})$, and consequently the regularised Holevo quantity $\chi_\mathrm{reg}(\mathcal{N})$, are further upper bounded by $\log (d)$, where $d$ is the output dimension of the channel $\mathcal{N}$. With this we summarise a sequence of inequalities:
\begin{align}
    &I(m'{:}m) \leq C^{(1)}(\mathcal{N}) \leq C(\mathcal{N}) = \chi({\mathcal{N}}) \nonumber \\
   & \leq C^E(\mathcal{N}) = \chi_\mathrm{reg}(\mathcal{N}) \leq \log (d),
\end{align}
with $I(m'{:}m)$ being the unoptimized mutual information between Alice's input $m$ and Bob's output $m'$. A consequence of this chain of inequalities is that, for any communication setting, a $d$ dimensional quantum channel cannot transfer more than $\log (d)$ bits. In other words, quantum systems can carry at most one bit per qubit.

\section{The Process framework} \label{sec:3}
In conventional quantum communication protocols, the communicating parties act in a well defined order. However, quantum mechanics allows possibilities, where the order between the communicating parties is unknown or even indefinite \cite{hardy2007towards, chiribella09}. This possibility can be modelled within the so-called process matrix formalism \cite{oreshkov12, araujo15, oreshkov15}.
Consider a situation involving two parties --- Alice and Bob, each acting in a local laboratory. In each run of the experiment, each of them receives a quantum system in their respective laboratories, performs some operation on it and sends it out \cite{Araujo2017purification}; Alice's (Bob's) input and output systems will be denoted by $A_I$ ($B_I$) and $A_O$ ($B_O$), respectively. Each party can also access an additional system to perform their local operations. The most general operation is, therefore, a completely positive (CP) map $\mathcal{M}:X_I{\otimes} X_I'{\rightarrow} X_O{\otimes} X_O'$, where, for $X=A, B$; $X_I', X_O'$ denote the additional system and we use the system's label to represent the corresponding state space. 

It is convenient to represent CP maps as positive semidefinite matrices, $M {\in} X_I{\otimes} X_I'{\otimes} X_O{\otimes} X_O'$, using the Choi isomorphism \cite{choi75}:
\begin{multline}
    M^{X_IX_I'X_O X'_O} \\
    = \sum_{i,j =1}^{d_{X_I}d_{X'_I}}\ketbra{i}{j}^{X_IX_I'}\otimes\mathcal{M}(\ketbra{i}{j}^{X_IX_I'}). 
    \label{CP_map}
\end{multline}
Here, the set $\{\ket{i}\}$ represents an orthonormal basis in $X_I\otimes X_I'$ and $d_{X}$ represents the dimension of $X$.  If the map $\mathcal{M}$ is completely positive and trace preserving (CPTP), then the Choi representation gives an additional constraint  
\begin{equation}
    \tr_{X_OX'_O} M^{X_IX_I'X_O X'_O}  = \mathbb{1}^{X_{I}X_I'}.
\end{equation}
A CPTP map (also called \emph{channel}) represents an operation that can be performed with probability one, while a CP, trace non-increasing map generally is the conditional transformation corresponding to a particular outcome of a measurement.

The resource connecting the two communicating parties is described by the \emph{process matrix} $W^{A_IA_OB_IB_O}$. This encodes the background process that governs how the systems on which the parties act relate to each other, be it a shared state, a channel from one to the other, or more general scenarios. The process matrix $W$ has to satisfy a set of constraints:
\begin{align}
    W &\geq 0, \\
    \tr W &= d_{A_O}d_{B_O}, \\
    _{B_IB_O}W &= _{A_OB_IB_O}W, \\ \label{AtoBstate}
    _{A_IA_O}W &= _{B_OA_IA_O}W,\\ \label{noloops}
    W &= _{A_O}W + _{B_O}W - _{A_OB_O}W. 
\end{align}
Here, ${_x}W := \mathbb{1}^x/d_x \otimes Tr_{x}W$ is the `trace-and-replace' notation \cite{araujo15}, which discards subsystem $x$ and replaces it with the normalised identity. Here appropriate reordering of the tensor factors is implied. The concatenation of Alice's (Bob's) local operation $\mathcal{M}^A$ and $\mathcal{M}^B$ with the process $W$ is given by $W {*} M^ A {*} M^B$ where $M^A$ ($M^B$) is the Choi representation of the map $\mathcal{M}^A$ ($\mathcal{M}^B$) and `*' is the link product \cite{chiribella12} defined as

\begin{align}
    P * Q := \tr _{\mathcal{P} \cap \mathcal{Q}} [(\mathbb{1}^{\mathcal{P} \setminus \mathcal{Q}} \otimes P^{T_{\mathcal{P} \cap \mathcal{Q}}})(Q \otimes \mathbb{1}^{\mathcal{Q} \setminus \mathcal{P}})]. 
    \label{link-product}
\end{align}
Here, $\mathcal{P}$ and $\mathcal{Q}$ are the Hilbert spaces associated with $P$ and $Q$, the superscript `$T_{\mathcal{P} \cap \mathcal{Q}}$' represents partial transpose on the shared Hilbert spaces.

\begin{figure}[!t]
\begin{center}
\includegraphics[width=0.8\columnwidth]{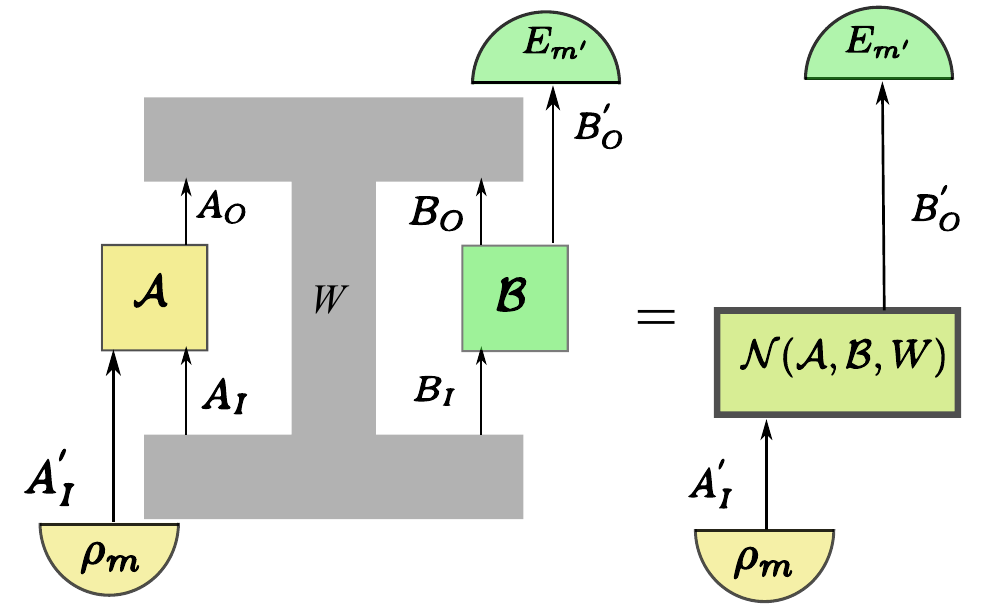}
\vspace{-2mm}
\caption{ A process $W^{A_IA_OB_IB_O}$ with two CPTP maps $A^{A'_IA_IA_O}$ and $B^{B_IB_OB'_O}$ forms a new channel $\mathcal{N}(\mathcal{A},\mathcal{B},W)$, as in Eq.~\eqref{channel}, with input system $A'_I$ and the output system $B'_O$. Alice can use this channel to communicate to Bob by encoding the quantum state $\rho_m$ at her input system and Bob performing a POVM measurement $E_{m'}$ at his output system.}
\label{fig:Protocol}
\end{center}
\end{figure}

\section{One directional communication through an indefinitely ordered process}\label{sec:4}

In this section we introduce our classical communication protocol through an arbitrary process as shown in Fig.~\ref{fig:Protocol}. In this protocol, both Alice and Bob can use some quantum channels $\mathcal{A}$ and $\mathcal{B}$. Alice's channel $\mathcal{A}$ has $A'_I A_I$ as input and $A_O$ as output, while Bob's channel $\mathcal{B}$ has $B_I$ as input and $B_O B'_O$ as output.

The process matrix $W^{A_IA_OB_IB_O}$ acting on these channels forms a new quantum channel $\mathcal{N}(\mathcal{A}, \mathcal{B}, W)$ with input quantum system $A'_I$ and the output quantum system $B'_O$, as shown in Fig.~\ref{fig:Protocol}. The Choi representation of this new channel is 

\begin{align}
    N(\mathcal{A}, \mathcal{B},& W)^{A'_IB'_O}  \nonumber \\
    :=& W^{A_IA_OB_IB_O} * (A^{A'_IA_IA_O} * B^{B_IB_OB'_O}) \nonumber \\
    =& W^{A_IA_OB_IB_O} * \big(A^{A'_IA_IA_O} \otimes B^{B_IB_OB'_O}\big).
    \label{channel}
\end{align}
Here, $A$ and $B$ are the Choi representations of the quantum channels $\mathcal{A}$ and $\mathcal{B}$ respectively.

\subsection{Asymptotic setting}

Similar to the conventional classical communication through quantum channels, we need to introduce a notion of asymptotic setting for process, namely to formalise the notion of repeated use of independent copies of a process. The goal turns out to be non-trivial as one can construct different asymptotic settings by allowing non-product channels across the different copies \cite{Jia_2018, Gu_rin_2019}, resulting in non-trivial constraints on the admissible operations and processes \cite{Perinotti2017, Kissinger2019}. For example, Alice could feed the output of her first channel to her second one. This, however, would require extra knowledge about the causal relations between the different uses of the process and, for a process with bidirectional signalling, it would be incompatible with Bob sending his second output to the first input. As we are investigating causal structures as communication resources, we assume that all available causal relations are encoded in the process itself, which leads to the asymptotic setting, first introduced in Ref.~\cite{Jia_2019}, where only product operations across different parties are allowed.

 Our choice of asymptotic setting results in a set of independent channels $\mathcal{N}_j = \mathcal{N}(\mathcal{A}_j, \mathcal{B}_j, W)$,  as shown in Fig.~\ref{fig:process_asymptotic}. Here $\mathcal{A}_j$, $\mathcal{B}_j$  are the local operations performed by Alice and Bob respectively. Alice encodes her message $m$ in a quantum state $\rho_m^{(n)}{\in} \otimes_{j=1}^n A_I^{'j}$ and sends the state to Bob through the channels $\{\mathcal{N}_j\}$. After receiving the transformed state, Bob performs a POVM on his quantum system $\otimes_{j=1}^n B^{'j}_O$. 
 With this, we conceptualise a protocol for one way communication from Alice to Bob ($A {\rightarrow}{B}$) in the following way:

\begin{definition}
\label{Def:A_to_B_code}
Given a bipartite processes matrix $W^{AB}$, we define an \emph{$A{\rightarrow} B$ protocol with $n$ uses of $W$} as
\begin{enumerate}
    \item A set of local operations $\{\mathcal{A}_j,\mathcal{B}_j\}_{j=1}^n$, where
    \begin{align*}
        \mathcal{A}_j &: A^{'j}_{I}\otimes A^j_{I} {\rightarrow} A^j_{O},  \\
        \mathcal{B}_j &: B^j_{I} \rightarrow B^j_{O}\otimes B^{'j}_{O}
    \end{align*}
    are CPTP maps;
    \item A state encoding $m\mapsto \rho^{(n)}_m \in \bigotimes_{j=1}^n A^{'j}_{I}$, where $\rho^{(n)}_m\geq 0$ and $\tr \rho^{(n)}_m =1$;
    \item A decoding POVM $\{E^{(n)}_{m'}\}_{m'}$, where $E^{(n)}_{m'}\geq 0$ and $\sum_{m'} E^{(n)}_{m'} = \mathbb{1}$.
\end{enumerate}
\end{definition}

Such a protocol produces a classical channel described by the conditional probabilities
\begin{equation}
    P(m'|m) = \tr \left[ E^{(n)}_{m'}\bigotimes_{j=1}^n \mathcal{N}_j\big(\rho^{(n)}_m\big) \right].
\end{equation}
We say that two protocols for the same process $W$ are \emph{equivalent} if they produce the same conditional probabilities $P(m'|m)$.

In general, the ancillary spaces $A^{'j}_{I}$, $B^{'j}_{O}$ need not be isomorphic for different $j$. However, we can always embed each of them into a space isomorphic to one of the highest dimension. In the following, we assume that all spaces are of equal dimension and are identified through a choice of canonical basis.

Note this specific arrangement of channels results in a non-stationary asymptotic setting. Formulating communication capacity of such a setup poses a non-trivial challenge \cite{Verdu_Han_94,Hayashi_non-stationary}. To alleviate this issue, we employ a scheme to make the channels stationary. Specifically, we replace the local operations $\{\mathcal{A}_j\}$ and $\{\mathcal{B}_j\}$ with fixed local operations $\{{\mathcal{A}}\}$ and $\{{\mathcal{B}}\}$ respectively with support of additional local CPTP maps $\mathcal{E}_j$ and $\mathcal{F}_j$, where $\mathcal{A}_j{=}{\mathcal{A}}{*}\mathcal{E}_j$ and $\mathcal{B}_j{=}\mathcal{F}_j{*}{\mathcal{B}}$. Thus we have multiple independent and identical copies of the channel ${\mathcal{N}}{=}\mathcal{N}({\mathcal{A}},{\mathcal{B}},W)$. Feasibility of this approach is due to the fact that, in an $A\rightarrow B$ protocol, Bob's output can be discarded, i.e., in such a protocol a process matrix $W$ can be replaced by $_{B_O}W$, as shown in Fig.~\ref{fig:simplify_lemma} and in Refs.~\cite{oreshkov12, chiribella08, chiribella09, morimae2014process}.
As we are going to use this fact multiple times, we formulate it as a lemma and prove it below for completeness:

\begin{lemma}\label{noBO}
If Alice has trivial ancillary output, $\mathcal{A}:A'_I\otimes A
_I\rightarrow A_O$, we can replace $W$ with $_{B_O}W$:
\begin{equation}
    \mathcal{N}(\mathcal{A},\mathcal{B}, W) = \mathcal{N}(\mathcal{A},\mathcal{B}, {}_{B_O}W).
\end{equation}
\end{lemma}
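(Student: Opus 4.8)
The plan is to pass to Choi matrices, where the claim reads $N(\mathcal{A},\mathcal{B},W)=N(\mathcal{A},\mathcal{B},{}_{B_O}W)$ with $N(\mathcal{A},\mathcal{B},W)=W*(A\otimes B)$ as in Eq.~\eqref{channel}, and to reduce it to the single identity $W*A=({}_{B_O}W)*A$, which I would then establish using only that $\mathcal{A}$ is CPTP with trivial output ancilla together with two of the process-validity constraints. By associativity of the link product, $N(\mathcal{A},\mathcal{B},W)=(W*A)*B$ and $N(\mathcal{A},\mathcal{B},{}_{B_O}W)=\big(({}_{B_O}W)*A\big)*B$, so it suffices to prove $(W-{}_{B_O}W)*A=0$.

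First I would show that $Y:=W-{}_{B_O}W$ is ``flat in $A_O$'', i.e.\ $Y={}_{A_O}Y$. This comes straight from the affine constraint $W={}_{A_O}W+{}_{B_O}W-{}_{A_O B_O}W$: subtracting ${}_{B_O}W$ gives $Y={}_{A_O}W-{}_{A_O B_O}W$, and since ${}_{A_O}$ and ${}_{B_O}$ act on disjoint subsystems they commute and satisfy ${}_{A_O}{}_{B_O}={}_{A_O B_O}$, so $Y={}_{A_O}W-{}_{A_O}({}_{B_O}W)={}_{A_O}(W-{}_{B_O}W)={}_{A_O}Y$.

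Next I would substitute $Y={}_{A_O}Y=\tfrac{\mathbb{1}^{A_O}}{d_{A_O}}\otimes\tr_{A_O}Y$ into the link product and contract the shared spaces $A_I$ and $A_O$. Since $\mathcal{A}$ is CPTP with no output ancilla, its Choi operator obeys $\tr_{A_O}A^{A'_I A_I A_O}=\mathbb{1}^{A'_I A_I}$; hence the factor $\tfrac{\mathbb{1}^{A_O}}{d_{A_O}}$ absorbs all of the $A$-dependence as soon as $A_O$ is traced out, and a short computation with the link product gives $Y*A=\tfrac{1}{d_{A_O}}\,\mathbb{1}^{A'_I}\otimes\tr_{A_I A_O}Y$. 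Finally, $\tr_{A_I A_O}Y=\tr_{A_I A_O}W-\tfrac{\mathbb{1}^{B_O}}{d_{B_O}}\otimes\tr_{A_I A_O B_O}W$ vanishes identically: this is precisely the validity constraint ${}_{A_I A_O}W={}_{B_O A_I A_O}W$ of Eq.~\eqref{noloops}, which states that discarding Alice leaves a marginal with trivial $B_O$ part, i.e.\ $W$ carries no signal into Bob's output. Therefore $Y*A=0$, so $W*A=({}_{B_O}W)*A$, and linking both sides with $B$ finishes the proof.

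I expect the genuinely non-obvious step to be recognizing that the affine constraint forces $W-{}_{B_O}W$ to be flat in $A_O$; once that is seen, the CPTP condition $\tr_{A_O}A=\mathbb{1}^{A'_I A_I}$ does the rest almost mechanically, the only care needed being the partial transposes in the definition of the link product in Eq.~\eqref{link-product}. Conceptually the lemma just says that a deterministic local operation with no ancillary output closes off the only route by which Bob's output can feed back into the experiment, so the process acts as though $B_O$ were discarded.
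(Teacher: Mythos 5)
Your argument is correct and is essentially the paper's own proof in lightly repackaged form: both reduce the claim to $A*W = A*{}_{B_O}W$, invoke the affine constraint \eqref{noloops} (you via the flatness of $Y = W - {}_{B_O}W$ in $A_O$, the paper by expanding $A*W$ into three terms and evaluating $A*{}_{A_O}W$), contract using $\tr_{A_O}A^{A'_IA_IA_O} = \mathbb{1}^{A'_I A_I}$, and finish with the marginal constraint ${}_{A_I A_O}W = {}_{B_O A_I A_O}W$. One small correction: that last constraint is Eq.~\eqref{AtoBstate}, not Eq.~\eqref{noloops} as you cite---\eqref{noloops} is the affine relation you already used.
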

\begin{proof}
It is sufficient to show $A{*}W = A{*}{}_{B_O}W$. Using condition \eqref{noloops}, we can write $A{*}W = A{*}{}_{A_O}W + A{*}{}_{B_O}W - A{*}{}_{A_O B_O}W$. As the second and third terms are already in the desired form, we only need to look at the first term: 
\begin{multline}
A{*}{}_{A_O}W = \tr_{A_IA_O} \left[ A^{A'_IA_IA_O}\cdot \left( \frac{\mathbb{1}^{A_O}}{d^A_O} \otimes \tr_{A_O}W \right) \right]\\
= \frac{1}{d^A_O} \tr_{A_I} \left[\left(  \tr_{A_O}A^{A'_IA_IA_O} \right) \cdot \left(  \tr_{A_O}W \right) \right] \\
= \frac{1}{d^A_O} \tr_{A_I} \left[ \mathbb{1}^{A'_I A_I} \cdot \left(  \tr_{A_O}W \right) \right] \\
= \frac{\mathbb{1}^{A'_I}}{d^A_O} \otimes \tr_{A_IA_O} W = \frac{\mathbb{1}^{A'_I}}{d^A_O} \otimes \tr_{A_IA_O}\left( _{B_O}W \right),
\end{multline}
where we used $\tr_{A_O}A^{A'_IA_IA_O} = \mathbb{1}^{A'_I A_I}$ (because $A^{A'_IA_IA_O}$ is CPTP) in the third line and Eq.~\eqref{AtoBstate} in the last line.
\end{proof}

This lemma allows us to replace Bob's operation $\mathcal{B}_j$ by $\sigma^{B_O}\otimes \big(\tr_{B_O} \mathcal{B}_j\big)^{B_IB'_O}$, with $\sigma^{B_O}$ being an arbitrary state. For the encoding operation $\mathcal{A}_j$, on the other hand, we extend the input system to make it a controlled operation while treating the control state as Alice's extended encoded message. Thus we present the following theorem.

\begin{figure}[!t]
\begin{center}
\includegraphics[width=\columnwidth]{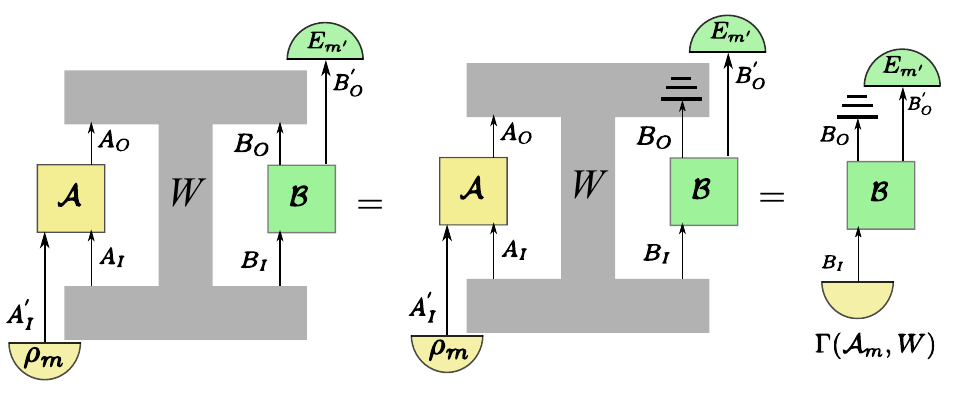}
\vspace{-2mm}
\caption{Pictorial depiction of Lemma~\ref{noBO}. A one way communication from Alice to Bob through a channel $\mathcal{N}(\mathcal{A},\mathcal{B},W)$ simplifies to a channel $\mathcal{N}(\mathcal{A},\mathcal{B}, _{B_O}W)$. The rightmost picture shows further simplification to a quantum state  $\Gamma(\mathcal{A}_m,W)$ with the system $B_O$ being discarded, as in Eq.~\eqref{simplification}.     \vspace{0 mm}}
\label{fig:simplify_lemma}
\end{center}
\end{figure}

\begin{theorem} \label{Theorem:asymptoticsemplification}
Every $A\rightarrow B$ protocol is equivalent to one with fixed local operations
\begin{align}
   \bar {\mathcal{A}} &: A^{''j}_{I}\otimes A^j_{I} \rightarrow A^j_{O},  \\
        \bar{\mathcal{B}} &: B^j_{I} \rightarrow B^j_{O}\otimes B^{''j}_{O},
\end{align}
state encoding 
\begin{equation}\label{newencoding}
    \bar{\rho}^{(n)}_m = \bigotimes_{j=1}^n \mathcal{E}_j\left(\rho^{(n)}_m\right),
\end{equation} 
and decoding POVM
\begin{equation}\label{newdecoding}
    \bar{E}^{(n)}_{m'} = \bigotimes_{j=1}^n \mathcal{F}_j^{\dag}(E^{(n)}_{m'}),
\end{equation}
 where $\mathcal{E}_j{:}A^{'j}_I {\rightarrow} A^{''j}_I$,  $\mathcal{F}_j{:}B^{'j}_O {\rightarrow} B^{''j}_O$ are CPTP and $\mathcal{F}^{\dag}$ denotes the Hilbert-Schimdt adjoint, defined through $\tr\left[ A^{\dag} \mathcal{F}(B) \right]= \tr \left[ \mathcal{F}^{\dag}\left( A^{\dag}\right) B \right]$.
 \end{theorem}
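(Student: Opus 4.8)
The plan is to reduce the claim to the single-copy identity $\mathcal{N}(\mathcal{A}_j,\mathcal{B}_j,W) = \mathcal{F}_j \circ \bar{\mathcal{N}} \circ \mathcal{E}_j$, where $\bar{\mathcal{N}} = \mathcal{N}(\bar{\mathcal{A}},\bar{\mathcal{B}},W)$, and then show that pre-composing the encoding with $\bigotimes_j \mathcal{E}_j$ and post-composing the decoding POVM with $\bigotimes_j \mathcal{F}_j^\dagger$ exactly compensates. Once that factorization holds copy-by-copy, the conditional probabilities are manifestly unchanged: $\tr\big[E^{(n)}_{m'}\bigotimes_j \mathcal{N}_j(\rho^{(n)}_m)\big] = \tr\big[(\bigotimes_j \mathcal{F}_j^\dagger)(E^{(n)}_{m'})\, \bigotimes_j \bar{\mathcal{N}}(\mathcal{E}_j(\rho^{(n)}_m))\big]$, which is precisely the statement that the new protocol with $\bar{\rho}^{(n)}_m$ and $\bar{E}^{(n)}_{m'}$ is equivalent.

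First I would set up the factorizations of the local operations. By Lemma~\ref{noBO}, since the $A\to B$ direction lets us discard Bob's output, Bob's operation $\mathcal{B}_j$ may be replaced by $\sigma^{B_O}\otimes(\tr_{B_O}\mathcal{B}_j)$; writing $\tr_{B_O}\mathcal{B}_j : B_I \to B'_O$ as $\mathcal{F}_j \circ \bar{\mathcal{B}}$ for a fixed $\bar{\mathcal{B}} : B_I \to B_O\otimes B''_O$ (e.g.\ the identity embedding into $B_O\otimes B''_O$ with $B''_O$ a copy of $B_I$) and a residual CPTP $\mathcal{F}_j : B''_O \to B'_O$ gives the Bob side. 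For Alice, the idea is to enlarge the input: let $\bar{\mathcal{A}} : A''_I\otimes A_I \to A_O$ be a fixed "controlled" map that reads a description of a channel $A'_I\to A_I$ off the register $A''_I$ and applies it; then $\mathcal{A}_j = \bar{\mathcal{A}} * \mathcal{E}_j$ where $\mathcal{E}_j : A'_I \to A''_I$ prepares the control register encoding $\mathcal{A}_j$'s action. (Concretely one takes $A''_I$ large enough to host a Choi state of the relevant map and $\bar{\mathcal{A}}$ to implement link-product contraction.) The existence of such $\bar{\mathcal{A}}$, $\bar{\mathcal{B}}$, $\mathcal{E}_j$, $\mathcal{F}_j$ — all CPTP, with dimensions bounded uniformly so the embedding remark applies — is the content to verify.

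Next I would push these factorizations through the link product. Using associativity and the tensor-product structure in Eq.~\eqref{channel}, $N(\mathcal{A}_j,\mathcal{B}_j,W) = W * (A_j \otimes B_j) = W * \big((E_j * \bar{A}) \otimes (\bar{B} * F_j)\big)$, and since $\mathcal{E}_j$ acts only on $A'_I \to A''_I$ and $\mathcal{F}_j$ only on $B''_O \to B'_O$, these factors commute out of the contraction over $A_I,A_O,B_I,B_O$: $N(\mathcal{A}_j,\mathcal{B}_j,W) = E_j * \big(W * (\bar{A}\otimes \bar{B})\big) * F_j = E_j * \bar{N} * F_j$. In channel language this is $\mathcal{N}_j = \mathcal{F}_j \circ \bar{\mathcal{N}} \circ \mathcal{E}_j$. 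Finally, substituting into $P(m'|m)$, moving $\mathcal{F}_j$ onto the POVM via the Hilbert–Schmidt adjoint (using $\tr[E\,\mathcal{F}(\rho)] = \tr[\mathcal{F}^\dagger(E)\,\rho]$, and noting $\sum_{m'}\mathcal{F}_j^\dagger(E^{(n)}_{m'}) = \mathcal{F}_j^\dagger(\mathbb{1}) = \mathbb{1}$ since $\mathcal{F}_j$ is trace-preserving, so $\bar{E}^{(n)}_{m'}$ is a valid POVM), and absorbing $\mathcal{E}_j$ into the state gives $\bar{\rho}^{(n)}_m$, yielding the claimed equivalent protocol.

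The main obstacle is step two: exhibiting the fixed $\bar{\mathcal{A}}$ together with the "message-dependent" preparations $\mathcal{E}_j$ in a way that is genuinely CPTP on both pieces and has input/output dimensions that are uniform in $j$ (so that the earlier embedding convention legitimately identifies all the $A''_I$, $B''_O$). The controlled-operation construction is standard, but one must be careful that $\mathcal{E}_j$ outputs a normalized state (the map $\mathcal{A}_j$ being CPTP is what guarantees its Choi state is normalizable) and that $\bar{\mathcal{A}}$'s trace-preservation does not secretly depend on which $\mathcal{A}_j$ was encoded — this is where the explicit form of $\bar{\mathcal{A}}$ as a link-product contraction, rather than an ad hoc "if–then", does the work.
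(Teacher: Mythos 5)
Your overall skeleton is the same as the paper's: discard $B_O$ via Lemma~\ref{noBO}, replace $\mathcal{B}_j$ by $\sigma^{B_O}\otimes\tr_{B_O}\mathcal{B}_j$ with $\bar{\mathcal{B}}$ the identity embedding $B_I\to B''_O$, push the residual $\mathcal{F}_j$ onto the POVM with the Hilbert--Schmidt adjoint (checking $\sum_{m'}\mathcal{F}_j^\dagger(E^{(n)}_{m'})=\mathbb{1}$ from trace preservation), and absorb a preparation $\mathcal{E}_j$ into the encoded state. The Bob side and the bookkeeping are fine. The gap is in your ``concrete'' Alice-side construction: a fixed CPTP map $\bar{\mathcal{A}}$ that reads a Choi state of an arbitrary channel off $A''_I$ and ``implements the link-product contraction'' does not exist. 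The link product with a program state involves a partial transpose; physically it corresponds to post-selected gate teleportation, and a deterministic universal programmable processor with a finite-dimensional program register is excluded by the no-programming theorem. So the map you lean on as the thing that ``does the work'' is not completely positive and trace preserving, and this step of your argument fails as stated.

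The fix is exactly the ``ad hoc if--then'' you set aside, and it is what the paper does. ``Fixed'' in the theorem means fixed across the $n$ uses of a \emph{given} protocol, so $\bar{\mathcal{A}}$ only has to reproduce the finite family $\{\mathcal{A}_j\}_{j=1}^n$, not all channels: take $A''_I=\mathbb{C}^n\otimes A'_I$ and the classically controlled map
\begin{equation}
\bar{\mathcal{A}}\left(\sigma\otimes\rho\right)=\sum_{j=1}^n \bra{j}\sigma\ket{j}\,\mathcal{A}_j(\rho),\qquad \mathcal{E}_j(\rho)=\ketbra{j}{j}\otimes\rho ,
\end{equation}
(extended to arbitrary inputs by measuring the control in the canonical basis). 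This is manifestly CP, and trace preservation holds for every control state precisely because each $\mathcal{A}_j$ is CPTP, so your worry that trace preservation might ``secretly depend'' on which map is encoded is unfounded; the control dimension $n$ is the same for all $j$, so the uniform-dimension embedding convention applies. With this $\bar{\mathcal{A}}$ and your own $\bar{\mathcal{B}}$, $\mathcal{E}_j$, $\mathcal{F}_j$, the rest of your argument (associativity of the link product and the adjoint trick) goes through and reproduces the paper's proof.
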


\begin{proof} (See Fig.~\ref{fig:simplify_theorem} for a pictorial representation of the proof.)
Let us start with the decoding.
Since Alice only performs CPTP maps with no ancillary output, we can apply Lemma \ref{noBO} and replace the process matrix $W$ with ${}_{B_O}W$, which is equal to identity on $B_O$. This implies that any $A\rightarrow B$ protocol is equivalent to one where we replace the local operations $\mathcal{B}_j$ with ${\sigma}^{B^j_O}\otimes \tr_{B^j_O}\mathcal{B}_j$ for some arbitrary state $\sigma$. Choosing the space $B^{''j}_{O}$ isomorphic to $B^j_{I}$, we see that the original combination of local operations $\mathcal{B}_j$ and decoding POVM is equivalent to performing the fixed operation $\bar{\mathcal{B}} = {\sigma}^{B^j_O} \otimes \mathcal{I}^{B^{j}_{I} \rightarrow B^{''j}_{O}}$ in each lab, and decoding POVM as in Eq.~\eqref{newdecoding}, with $\mathcal{F}_j = (\tr_{B^j_O} \mathcal{B}_j\circ \mathcal{I}^{B^{j}_{I} \rightarrow B^{''j}_{O}})^{B^{'j}_{O}}$.

Now for the encoding side: we set $A^{''j}_I = \mathcal{L}\left(\mathbb{C}^n_j\right)\otimes A^{'j}_I$ and define the controlled operation $\bar{\mathcal{A}}: \mathcal{L}\left(\mathbb{C}^n_j\right)\otimes A^{'j}_I \rightarrow A^j_{O}$ as
\begin{equation}
 \bar{\mathcal{A}}\left(\sigma\otimes \rho \right) = \sum_{j=1}^n \bra{j}\sigma \ket{j} \mathcal{A}_j\left(\rho\right)^{A^{j}_{O}},    
\end{equation}
which is manifestly CPTP. For canonical basis states in $\mathcal{L}\left(\mathbb{C}^n_j\right)$, this map gives 
\begin{equation}
    \bar{\mathcal{A}}\left(\ketbra{j}{j}\otimes \rho \right) = \mathcal{A}_j\left(\rho\right),     \label{Eq:A_bar}
\end{equation}
so the choice of local operation can be encoded into a choice of initial state, expanding the original encoding state as in Eq.~\eqref{newencoding}, with the maps $\mathcal{E}_j:A^{'j}_O \rightarrow A^{''j}_O$ defined as
\begin{equation}
    \mathcal{E}_j\left(\rho^{A^{'j}_I}\right) = \big(\ketbra{j}{j}\otimes \rho\big)^{A^{''j}_I}.
\end{equation}
\end{proof}

The relevance of this theorem is twofold. First, it shows that any protocol involving a different choice of local operations can be reproduced by fixing the local operations once and for all. This means that an asymptotic setting for processes can always be mapped to an asymptotic setting were the same channel is used $n$ times, $\mathcal{N}(\bar{\mathcal{A}},\bar{\mathcal{B}}, W)^{\otimes n}$. We call a protocol of this type \emph{stationary}. Second, state encoding and decoding POVM of an arbitrary protocol transform into those of a stationary one through product maps, Eqs.~\eqref{newencoding} and \eqref{newdecoding}. This means that the transformation preserves the nature of the asymptotic setting, viz.\ joint/product encoding or decoding. From now on, we will represent $\bar{\mathcal{A}}$, $\bar{\mathcal{B}}$, $\bar{\rho}_m^{(n)}$ and $\bar{E}_{m'}^{(n)}$ without the bar on top.

\begin{figure}[!t]
\begin{center}
\includegraphics[width=\columnwidth]{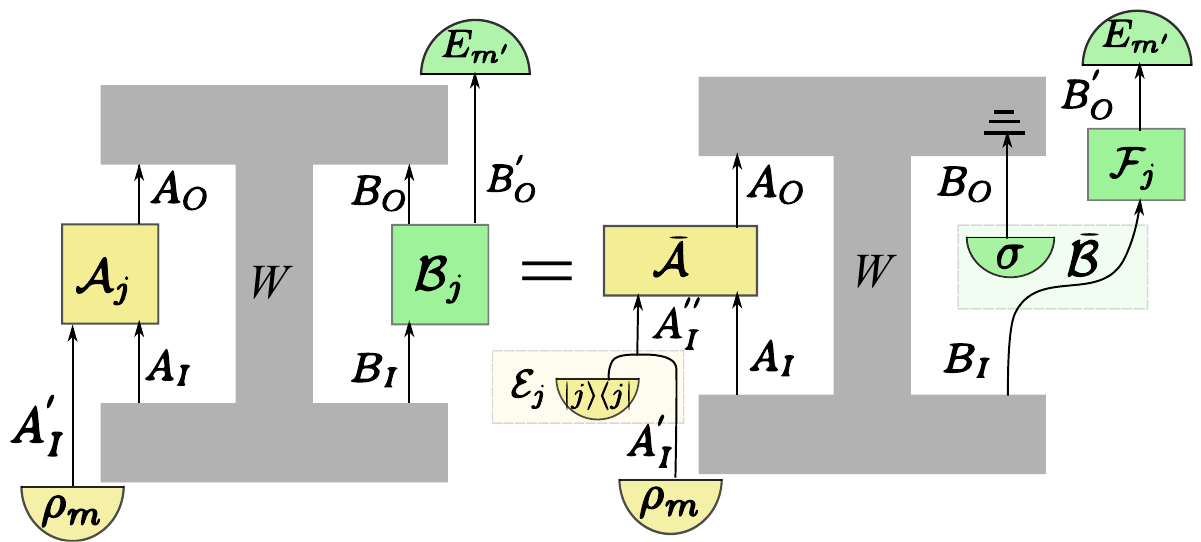}
\vspace{-2mm}
\caption{Pictorial depiction of Theorem~\ref{Theorem:asymptoticsemplification}. We convert a non-stationary channel $\mathcal{N}(\mathcal{A}_j,\mathcal{B}_j, W)$ to a stationary channel $\mathcal{N}(\bar{\mathcal{A}},\bar{\mathcal{B}}, W)$. Due to Lemma~\ref{noBO}, Bob's system $B_O$ can be set to a fixed state $\sigma$ and corresponding operation $\bar{\mathcal{B}} = {\sigma}^{B^j_O} \otimes \mathcal{I}^{B^{j}_{I} \rightarrow B^{''j}_{O}}$. Alice's operation, on the other hand, can be extended to a controlled CPTP map $\bar{\mathcal{A}}$ as described in Eq.~\eqref{Eq:A_bar}.\vspace{0 mm}}
\label{fig:simplify_theorem}
\end{center}
\end{figure}

\subsection{Classical capacities of a quantum process}

\emph{Holevo quantity for a process:} Having introduced a stationary protocol with an asymptotic setting of the channel $\mathcal{N}({\mathcal{A}},{\mathcal{B}},W)$, as shown in Theorem~\ref{Theorem:asymptoticsemplification},  we can define the corresponding Holevo quantity for a process $W$ as
\begin{align}
    \chi(W)^{A \rightarrow{B}}:= \underset{\mathcal{A}, \mathcal{B}}{{\mathrm{max}}}\ {\chi}\left[{\mathcal{N}(\mathcal{A}, \mathcal{B}, W)}\right].
    \label{holevo_w}
\end{align}
We also introduce the $n$-th extension $\chi(W^{\otimes n})^{A \rightarrow{B}}$ of the above quantity as 
\begin{align}
    \chi(W^{\otimes n})^{A \rightarrow{B}}:= \underset{\mathcal{A}, \mathcal{B}}{{\mathrm{max}}}\ {\chi}\left[{\mathcal{N}(\mathcal{A}, \mathcal{B}, W)^{\otimes n}}\right].
    \label{holevo_n_th_w}
\end{align}

\emph{Communication capacity for a process:} We can associate different communication capacities to an arbitrary process as
\begin{align}
C^{\sharp} (W)^{A\rightarrow{B}} = \max_{\mathcal{A}, \mathcal{B}}C^{\sharp}\left(\mathcal{N}(\mathcal{A}, \mathcal{B},W)\right).
\label{Eq:Capacity_w}
\end{align}
Where $C^{\sharp}=C^{(1)}, C, C^{\textrm{E}}$. Here $C^{(1)}(W)^{A\rightarrow{B}}$ represents product encoding-product decoding capacity, as in Eq.~\eqref{accessible_information}, $C(W)^{A\rightarrow{B}}$ represents product encoding-joint decoding capacity, as in Eq.~\eqref{Holevo_capacity} and finally, $C^{E}(W)^{A\rightarrow{B}}$ represents joint encoding-joint decoding capacity, as in Eq.~\eqref{Regularised_Holevo}. 

We can relate the Holevo quantity for a process to different $C^{\sharp} (W)^{A\rightarrow{B}}$ capacities. We show this in the following lemma.

\begin{lemma}
Different capacities associated with an arbitrary process $W$ are related to the Holevo quantity $\chi(W)^{A \rightarrow{B}}$ in the following way.

product encoding-product decoding: 
\begin{align}
   C^{(1)} (W)^{A\rightarrow{B}} \leq \chi(W)^{A \rightarrow{B}}.
\end{align}

product encoding-joint decoding:
\begin{align}
   C (W)^{A\rightarrow{B}} = \chi(W)^{A \rightarrow{B}}.
\end{align}

Joint encoding-joint decoding: 
\begin{align} \label{jointencoding}
   C^{E} (W)^{A\rightarrow{B}} = \lim_{n\rightarrow{\infty}}\frac{\chi(W^{\otimes n})}{n} .
\end{align}

\end{lemma}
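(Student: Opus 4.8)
The plan is to deduce all three statements from the corresponding channel-level facts of Section~\ref{sec:2}, combining the structural reduction of Theorem~\ref{Theorem:asymptoticsemplification} with the definitions in Eqs.~\eqref{holevo_w}, \eqref{holevo_n_th_w} and \eqref{Eq:Capacity_w}. The starting observation is that Theorem~\ref{Theorem:asymptoticsemplification} lets us restrict, without loss of generality, to \emph{stationary} protocols: for a fixed pair of local operations $\mathcal{A},\mathcal{B}$, the process $W$ acts as $n$ independent copies of the single channel $\mathcal{N}(\mathcal{A},\mathcal{B},W)$ of Eq.~\eqref{channel}, and the encoding/decoding of an arbitrary protocol are mapped to those of a stationary one by the product maps of Eqs.~\eqref{newencoding}--\eqref{newdecoding}. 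Since product maps cannot convert a product encoding (decoding) into a joint one, this reduction respects the distinction between the three asymptotic configurations, so that maximising the channel capacity $C^{\sharp}(\mathcal{N}(\mathcal{A},\mathcal{B},W))$ over $\mathcal{A},\mathcal{B}$, as in Eq.~\eqref{Eq:Capacity_w}, indeed captures the capacity of the process in each configuration.

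For the product-encoding/product-decoding bound and the product-encoding/joint-decoding identity, I would simply apply the channel relations Eqs.~\eqref{Eq:one-shot_and_holevo} and \eqref{Holevo_capacity} to $\mathcal{M}=\mathcal{N}(\mathcal{A},\mathcal{B},W)$ for each choice of $\mathcal{A},\mathcal{B}$, namely $C^{(1)}(\mathcal{M})\le\chi(\mathcal{M})$ and $C(\mathcal{M})=\chi(\mathcal{M})$, and then take the maximum over $\mathcal{A},\mathcal{B}$ on both sides. By the definition Eq.~\eqref{holevo_w} of $\chi(W)^{A\to B}$ this yields $C^{(1)}(W)^{A\to B}\le\chi(W)^{A\to B}$ and $C(W)^{A\to B}=\chi(W)^{A\to B}$ respectively; the first remains an inequality because $C^{(1)}(\mathcal{M})$ can be strictly below $\chi(\mathcal{M})$ for individual channels.

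The joint-encoding/joint-decoding case is the one that requires real work. Using $C^{E}(\mathcal{M})=\chi_{\mathrm{reg}}(\mathcal{M})=\lim_{n}\tfrac{1}{n}\chi(\mathcal{M}^{\otimes n})$ (Eqs.~\eqref{Eq:joint_encoding_capacity_channel}--\eqref{Regularised_Holevo}) together with Eq.~\eqref{Eq:Capacity_w} gives $C^{E}(W)^{A\to B}=\max_{\mathcal{A},\mathcal{B}}\lim_{n}\tfrac{1}{n}\chi\big(\mathcal{N}(\mathcal{A},\mathcal{B},W)^{\otimes n}\big)$, whereas the right-hand side of the claim, with $\chi(W^{\otimes n})^{A\to B}$ as in Eq.~\eqref{holevo_n_th_w}, is $\lim_{n}\tfrac{1}{n}\max_{\mathcal{A},\mathcal{B}}\chi\big(\mathcal{N}(\mathcal{A},\mathcal{B},W)^{\otimes n}\big)$, so the heart of the matter is to interchange the maximisation and the limit. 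One direction is free: since $\chi(\mathcal{M}^{\otimes n})$ is superadditive in $n$, each single-channel limit equals a supremum, and suprema commute with the maximisation over $\mathcal{A},\mathcal{B}$, so $C^{E}(W)^{A\to B}=\sup_{n}\tfrac{1}{n}\chi(W^{\otimes n})^{A\to B}$. It then suffices to show that $n\mapsto\chi(W^{\otimes n})^{A\to B}$ is itself superadditive, so that its regularisation is again a supremum; this is the step I expect to be the main obstacle, since the maximisation over a \emph{single} pair $\mathcal{A},\mathcal{B}$ used on all copies does not obviously combine across blocks. I would handle it by recycling the controlled-operation construction from the proof of Theorem~\ref{Theorem:asymptoticsemplification}: given operations optimal for an $n$-block and for an $m$-block, one builds a single $\bar{\mathcal{A}}$ that applies either of them according to a classical control register prepared block by block as part of the encoding, while keeping Bob's operation fixed via Lemma~\ref{noBO} and absorbing any residual discrepancy into a per-copy post-processing of the output. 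Because the Holevo quantity of an ensemble is monotone under post-processing and additive on product ensembles of product channels, feeding the two optimal sub-ensembles through this construction shows $\chi(W^{\otimes(n+m)})^{A\to B}\ge\chi(W^{\otimes n})^{A\to B}+\chi(W^{\otimes m})^{A\to B}$. Fekete's lemma then gives $\lim_{n}\tfrac{1}{n}\chi(W^{\otimes n})^{A\to B}=\sup_{n}\tfrac{1}{n}\chi(W^{\otimes n})^{A\to B}=C^{E}(W)^{A\to B}$, which is the claimed identity.
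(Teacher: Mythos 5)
Your treatment of the first two items is exactly the paper's: apply the channel-level relations $C^{(1)}(\mathcal{M})\le\chi(\mathcal{M})$ and $C(\mathcal{M})=\chi(\mathcal{M})$ to $\mathcal{M}=\mathcal{N}(\mathcal{A},\mathcal{B},W)$ and maximise over the local operations, using Eqs.~\eqref{Eq:Capacity_w}, \eqref{Eq:one-shot_and_holevo}, \eqref{Holevo_capacity} and \eqref{holevo_w}; nothing to add there. On the joint-encoding case your route is correct but genuinely more careful than the paper's. The paper's proof, Eq.~\eqref{proofjointencoding}, writes $\max_{\mathcal{A},\mathcal{B}}\lim_{n}\tfrac{1}{n}\chi\bigl[\mathcal{N}(\mathcal{A},\mathcal{B},W)^{\otimes n}\bigr]=\lim_{n}\tfrac{1}{n}\chi(W^{\otimes n})$ as a bare equality, i.e.\ it silently interchanges the maximisation over local operations with the regularisation limit of Eq.~\eqref{holevo_n_th_w}. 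You correctly identify this interchange as the nontrivial step and supply the missing justification: the easy direction via superadditivity of $\chi(\mathcal{M}^{\otimes n})$ for a fixed channel plus commuting suprema, and the harder direction via superadditivity of $n\mapsto\chi(W^{\otimes n})^{A\to B}$, established by recycling the controlled-operation construction of Theorem~\ref{Theorem:asymptoticsemplification} (a control register selects which block's encoding operation acts, while Bob's side is fixed by Lemma~\ref{noBO} and any residual operation is pushed into post-processing, under which $\chi$ is monotone, and $\chi$ is additive on product ensembles of product channels), and then Fekete's lemma. This is sound, and it is precisely the argument needed to make the paper's tacit max--limit exchange rigorous; the paper's version buys brevity by treating the exchange as definitional, while yours buys a complete proof, including existence of the limit in Eq.~\eqref{jointencoding}, at the cost of a longer construction.
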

  
\begin{proof}
product encoding-product decoding: 

Using Eqs.~\eqref{Eq:Capacity_w},~\eqref{Eq:one-shot_and_holevo} and ~\eqref{holevo_w}, we can write
\begin{align}
  C^{(1)} (W)^{A\rightarrow{B}} &= \max_{{\mathcal{A}},{\mathcal{B}}}C^{(1)} (\mathcal{N}({\mathcal{A}},{\mathcal{B}},W))  \nonumber \\
 & \leq \max_{{\mathcal{A}},{\mathcal{B}}} \chi(\mathcal{N}({\mathcal{A}},{\mathcal{B}},W))  \nonumber \\
 &=\chi(W)^{A \rightarrow{B}}.
 \label{Eq:one-shot}
\end{align}

product encoding-joint decoding: 

Using Eqs.~\eqref{Eq:Capacity_w}, ~\eqref{Holevo_capacity} and ~\eqref{holevo_w} we can write
\begin{align}
  C (W)^{A\rightarrow{B}} &= \max_{{\mathcal{A}},{\mathcal{B}}}C (\mathcal{N}({\mathcal{A}},{\mathcal{B}},W))  \nonumber \\
 & = \max_{{\mathcal{A}},{\mathcal{B}}} \chi(\mathcal{N}({\mathcal{A}},{\mathcal{B}},W))  \nonumber \\
 &=\chi(W)^{A \rightarrow{B}}.
 \label{Eq:product_enc_joint_decod}
\end{align}

 Joint encoding-joint decoding:  
 
 Using Eqs.~\eqref{Eq:Capacity_w}, ~\eqref{Eq:joint_encoding_capacity_channel}, ~\eqref{Regularised_Holevo} and ~\eqref{holevo_n_th_w} we can write

\begin{align}
  C^E (W)^{A\rightarrow{B}} &=\max_{{\mathcal{A}},{\mathcal{B}}}C^E (\mathcal{N}({\mathcal{A}},{\mathcal{B}},W)) \nonumber \\
  &=\max_{{\mathcal{A}},{\mathcal{B}}}\chi_{\mathrm{reg}} (\mathcal{N}({\mathcal{A}},{\mathcal{B}},W)) \nonumber \\
  &=\max_{{\mathcal{A}},{\mathcal{B}}} \lim_{n\rightarrow{\infty}}\frac{\chi\left[\mathcal{N}({\mathcal{A}},{\mathcal{B}},W)^{\otimes n}\right]}{n} \nonumber \\
  &=\lim_{n\rightarrow{\infty}}\frac{\chi(W^{\otimes{n}})}{n}.
  \label{proofjointencoding}
  \end{align}
\end{proof}

\subsection{Bounds on the classical capacities of a quantum process}
Although we have been able to reduce the classical capacities of processes to that of channels, our results so far do not provide an upper bound on how much information can be transmitted through a process. This is because the channel $\mathcal{N}(\mathcal{A}, \mathcal{B}, W)$ can have arbitrary input and output dimension.

To establish a bound, we first describe our protocol from a slightly different point of view. With Alice's input ensemble $\{p(m),{\rho_m}\}$, we can introduce a concatenation of ${\rho_m}$ with the channel ${\mathcal{A}}$ as $
{A}_m^{A_IA_O} = {\rho}_m^{A''_{I}} * {A}^{A''_IA_IA_O}$ with ${A}_m$ being the Choi representation of the resulting CPTP map ${\mathcal{A}}_m$. Similarly, we can combine Bob's channel ${\mathcal{B}}$ and POVM operation ${\{E_m\}}$ to describe a set of CP maps $\{{B}_{m'}^{B_IB_O} = B^{B_I B_O B''_O} * ( {E}_{m'}^T)^{B''_O}\}_{m'}$, where $\sum_{m'} {B}_{m'}^{B_IB_O}$ is a CPTP map and $({E}_{m'}^T)^{B''_O}$ is the Choi representation of ${E}_{m'}$. The superscript `$T$', denoting the transpose operator, is due to definition \eqref{CP_map}, according to which the Choi of a measurement operator ${E}_{m'}$ is its transpose ${E}_{m'}^T$. 

With this in mind, we present two theorems, that apply respectively to the product and joint encoding scenarios.

\begin{figure*}[!t]
\begin{center}
\includegraphics[width=2\columnwidth]{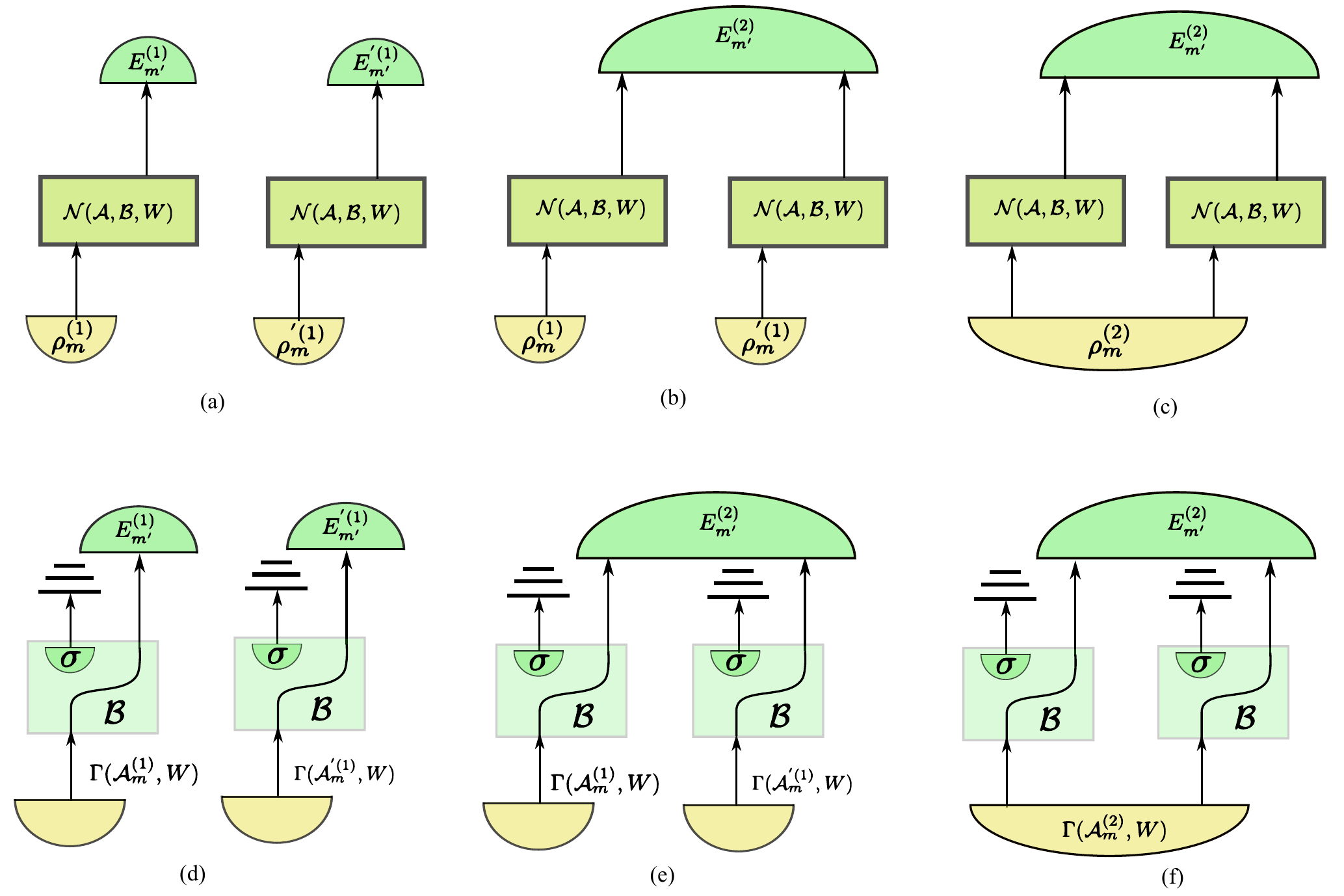}
\vspace{-2mm}
\caption{Encoding and decoding schemes. Here we show two copies of process $W$. Fig. (a), (b), (c) demonstrate product-encoding product decoding, product encoding joint decoding and joint encoding-joint decoding respectively. Product encoding is achieved using the joint state $\rho_m^{(1)}{\otimes}\rho_m^{'(1)}$ and joint encoding is achieved using the entangled state $\rho_m^{(2)}$. Similarly, we use $E_{m'}^{(1)}{\otimes}E_{m'}^{'(1)}$ for product decoding and $E_{m'}^{(2)}$ for joint decoding. Figs. (d), (e), (f) are the simplifications due to Theorem~\ref{Theorem:asymptoticsemplification} and Theorem~\ref{Theorem:2}. Relevant labelling of the Hilbert spaces are described in the text. }
\label{fig:process_asymptotic}
\end{center}
\end{figure*}

\begin{theorem}\label{Theorem:2}
In a one-way communication scenario, the optimisation of the Holevo quantity of a process $W$ can be simplified as
\begin{align}
    \chi(W)^{A \rightarrow{B}} = \underset{{\rho}_m,{\mathcal{A}}, p(m)}{{\mathrm{max}}}
    \ &S\left[\sum p(m) {\Gamma} ({A}_m, W )\right] - \nonumber \\
    &\sum p(m)S\left[{\Gamma} ({A}_m ,W)\right].
    \label{W_Holevo}
\end{align}
Here the $\Gamma(.)$ is a map that transforms the Choi representation of the CPTP operation ${\mathcal{A}}_m$ and the process matrix $W$, to a quantum state on Bob's input space $B_I$.
\end{theorem}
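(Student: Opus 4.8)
The plan is to reduce Eq.~\eqref{holevo_w} to an optimisation over the state reaching Bob's input, using Lemma~\ref{noBO} together with the monotonicity of the Holevo quantity under post-processing.

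First I would unfold the definitions. By Eqs.~\eqref{holevo_w} and \eqref{holevochi},
\begin{equation*}
\chi(W)^{A\to B} = \max_{\mathcal{A},\mathcal{B},\,p(m),\,\rho_m}\left[ S\Big(\sum_m p(m)\,\mathcal{N}(\rho_m)\Big) - \sum_m p(m)\, S\big(\mathcal{N}(\rho_m)\big)\right],
\end{equation*}
with $\mathcal{N}=\mathcal{N}(\mathcal{A},\mathcal{B},W)$ and the input ensemble absorbed into the outer maximisation. Since Alice's channel $\mathcal{A}$ has no ancillary output, Lemma~\ref{noBO} lets me replace $W$ by ${}_{B_O}W$, which is the identity on $B_O$; as remarked below that lemma, the action on Bob's side then depends on $\mathcal{B}$ only through the CPTP map $\mathcal{B}':=\tr_{B_O}\mathcal{B}\colon B_I\to B'_O$. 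Defining $A_m := \rho_m * A$ (the Choi matrix of $\mathcal{A}_m$) and $\Gamma(A_m,W) := \tr_{B_O}\big(A_m * W\big)$, a routine check using trace-preservation of $\mathcal{A}_m$ together with constraint~\eqref{AtoBstate} and $\tr W = d_{A_O}d_{B_O}$ shows that $\Gamma(A_m,W)$ is a density operator on $B_I$, and we obtain $\mathcal{N}(\rho_m) = \mathcal{B}'\big(\Gamma(A_m,W)\big)$.

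Next I would bound both ways. For ``$\le$'', apply the data-processing inequality for the Holevo quantity to the ensemble $\{p(m),\Gamma(A_m,W)\}$ and the channel $\mathcal{B}'$: this gives $\chi\big(\{p(m),\mathcal{B}'(\Gamma(A_m,W))\}\big)\le \chi\big(\{p(m),\Gamma(A_m,W)\}\big)$, whose right-hand side is exactly the objective in Eq.~\eqref{W_Holevo}. For ``$\ge$'', I would exhibit a Bob operation saturating the bound: take $B'_O$ isomorphic to $B_I$ and $\mathcal{B}(\rho)=\sigma^{B_O}\otimes\rho$ for an arbitrary state $\sigma$, so that $\mathcal{B}'=\mathcal{I}$; this is a legitimate CPTP map in the sense of Definition~\ref{Def:A_to_B_code}, and for it $\chi[\mathcal{N}(\mathcal{A},\mathcal{B},W)]$ equals the Eq.~\eqref{W_Holevo} objective at the corresponding $\mathcal{A}$, $\rho_m$, $p(m)$. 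Maximising over the remaining free data --- equivalently over $A_m$ and $p(m)$ --- then yields the claimed identity.

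The step I expect to require the most care is the bookkeeping that collapses $\max_{\mathcal{A},\mathcal{B},p(m),\rho_m}$ to $\max_{\rho_m,\mathcal{A},p(m)}$ of $\Gamma$: one must verify that (i) after the replacement $W\mapsto{}_{B_O}W$ the operator reaching $B_I$ genuinely depends only on $A_m$ and $W$, so that $\Gamma$ is well defined and captures all surviving $\mathcal{A}$- and $\rho_m$-dependence, and (ii) no enlargement of the ancilla $B'_O$ can beat the choice $\mathcal{B}'=\mathcal{I}$, which is precisely the content of the data-processing step. No new estimate beyond Lemma~\ref{noBO} and monotonicity of $\chi$ under CPTP maps is needed; the work is entirely in defining $\Gamma$ and checking that these two reductions are tight.
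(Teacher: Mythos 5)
Your proposal is correct and follows essentially the same route as the paper: Lemma~\ref{noBO} collapses Alice's encoding map and the process to an ensemble $\{p(m),\Gamma(A_m,W)\}$ on $B_I$, and the maximisation over Bob's operation is eliminated because a CPTP post-processing $\mathcal{B}'$ cannot increase the Holevo quantity while $\mathcal{B}'=\mathcal{I}$ attains it --- the paper leaves this last point implicit, and you simply spell out the two directions. The only fix needed is the normalisation of $\Gamma$: since $A_m * W = \mathbb{1}^{B_O}\otimes\Gamma(A_m,W)$, one must take $\Gamma(A_m,W)=A_m * \tr_{B_O}W/d_{B_O}$ (equivalently $\tr_{B_O}(A_m*W)/d_{B_O}$), as your unnormalised $\tr_{B_O}(A_m*W)$ has trace $d_{B_O}$ and would spoil the identity $\mathcal{N}(\rho_m)=\mathcal{B}'\big(\Gamma(A_m,W)\big)$.
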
  

\begin{proof}
The reduced process \cite{araujo15} on which Bob applies his CPTP map ${B}_{m'}^{B_IB_O}$ is described by ${A}_m^{A_IA_O} * W^{A_IA_OB_IB_O}$. Now, using Lemma \ref{noBO}, we can write
\begin{align}
     {A}_m^{A_IA_O} {*} W^{A_IA_OB_IB_O} &{=} \mathbb{1}^{B_O} {\otimes} \left({A}_m^{A_IA_O} {*} \frac{\tr_{B_O}W}{d_{B_O}}^{A_IA_OB_I}\right) \nonumber \\
     &= \mathbb{1}^{B_O} {\otimes} {\Gamma} ({A}_m,W )^{B_I}.
     \label{simplification}
\end{align}
Where ${\Gamma} ({A}_m, W ) {:=} {A}_m {*} \tr_{B_O}W/d_{B_O}$. In other words, as shown in Fig. \ref{fig:simplify_lemma}, we can simplify Alice's CPTP map and the process to a quantum state ${\Gamma} ({A}_m, W )$ in the Hilbert space $B_I$, with Bob's output at $B_O$ being discarded. The maximum classical information that can be encoded in the ensemble $\{p(m), {\Gamma} ({A}_m, W )\}$ is given by Eq.~\eqref{W_Holevo} \cite{Holevo73} where we only need to optimise over the free parameters $p(m)$, ${\rho}_m$ and ${\mathcal{A}}$. 

This implies one does not need to optimise over Bob's operation to obtain the Holevo quantity for the process. 
\end{proof}

A direct consequence of this theorem is that $\chi(W)^{A \rightarrow{B}} \leq \log(d_{B_I})$, because we have reduced the Holevo quantity of a process to that of an ensemble of states in $B_I$. In turn, this allows us to establish a bound on the product encoding capacities, i.e., $C^{(1)}(W)^{A\rightarrow{B}}$ and $C(W)^{A\rightarrow{B}}$, according to Eq.~\eqref{Eq:one-shot} and Eq.~\eqref{Eq:product_enc_joint_decod}, respectively. However, for joint encoding schemes we need to evaluate the regularised Holevo quantity for the optimum channel $\mathcal{N}({\mathcal{A}},{\mathcal{B}},W)$, as shown in Eq.~\eqref{jointencoding}. In ligth of this, we introduce the following theorem that bounds the capacity, $C^E(W)^{A\rightarrow{B}}$.

\begin{theorem}
 The joint encoding capacity for a process, $C^E(W)^{A\rightarrow{B}}$, is limited to Alice's joint CPTP map ${\mathcal{A}}_m^{(n)}{=}{\rho}_m^{(n)}{*}{\mathcal{A}}^{\otimes n}$ with $n\rightarrow{\infty}$, and the distribution $p(m)$. 
 \begin{align}
    C^E(W)^{A\rightarrow{B}} &{=} \nonumber \\ \lim_{n\rightarrow{\infty}}\underset{{\rho}_m^{(n)},{\mathcal{A}}, p(m)}{{\mathrm{max}}}
    \ \frac{1}{n} &\bigg(S\left[\sum p(m) {\Gamma} ({\mathcal{A}}_m^{(n)}, W )\right] - \nonumber \\
    &\sum p(m)S\left[{\Gamma} ({\mathcal{A}}_m^{(n)}, W)\right]\bigg).
    \label{W_Holevo_joint}
\end{align}
Here the map $\Gamma$ transforms the joint CPTP map $A_m^{(n)}$ and the process $W$ to an entangled state at Bob's input space $\otimes_{j=1}^n B_I^j$.
 \end{theorem}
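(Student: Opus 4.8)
The plan is to combine Theorem~\ref{Theorem:asymptoticsemplification}, the joint-encoding capacity formula~\eqref{jointencoding}, and the simplification map $\Gamma$ from Theorem~\ref{Theorem:2}, but now applied to $n$ copies at once. First I would invoke Eq.~\eqref{jointencoding}, which already tells us that $C^E(W)^{A\rightarrow B} = \lim_{n\to\infty} \chi(W^{\otimes n})^{A\rightarrow B}/n$, where by definition~\eqref{holevo_n_th_w} the quantity $\chi(W^{\otimes n})^{A\rightarrow B}$ is the maximum over \emph{arbitrary} local operations $\mathcal{A},\mathcal{B}$ of the Holevo quantity of $\mathcal{N}(\mathcal{A},\mathcal{B},W)^{\otimes n}$. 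So the work is to show that this per-block Holevo quantity equals the bracketed expression in~\eqref{W_Holevo_joint} — i.e.\ that at the level of $n$ parallel copies, the same reduction to an ensemble of states on Bob's joint input space $\bigotimes_j B_I^j$ goes through, with optimisation only over $p(m)$, the fixed single-copy channel $\mathcal{A}$, and the joint encoding state $\rho_m^{(n)}$ that feeds all $n$ copies of $\mathcal{A}$.

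The key steps, in order: (i) apply Theorem~\ref{Theorem:asymptoticsemplification} to the $n$-block protocol to replace the arbitrary $\{\mathcal{A}_j,\mathcal{B}_j\}$ by fixed $\bar{\mathcal{A}},\bar{\mathcal{B}}$ plus product pre/post-processing maps $\mathcal{E}_j,\mathcal{F}_j$, absorbing the $\mathcal{E}_j$ into the (now joint) encoding state $\rho_m^{(n)}$ and noting — as the theorem guarantees — that this absorption preserves the joint/product structure of encoding and decoding; (ii) since each copy's Alice operation now has trivial ancillary output, apply Lemma~\ref{noBO} independently to each of the $n$ copies of $W$, replacing $W^{\otimes n}$ effectively by $({}_{B_O}W)^{\otimes n}$ and hence discarding every $B_O^j$; (iii) observe that, as in Eq.~\eqref{simplification}, linking the joint map $\mathcal{A}_m^{(n)} = \rho_m^{(n)} * \mathcal{A}^{\otimes n}$ with $({}_{B_O}W)^{\otimes n}$ produces $\mathbb{1}^{\otimes_j B_O^j} \otimes \Gamma(\mathcal{A}_m^{(n)},W)$ with $\Gamma(\mathcal{A}_m^{(n)},W) := \mathcal{A}_m^{(n)} * (\tr_{B_O}W/d_{B_O})^{\otimes n}$ a genuine quantum state on $\bigotimes_j B_I^j$; (iv) conclude, exactly as in the proof of Theorem~\ref{Theorem:2} via Holevo's theorem, that the Holevo quantity of the channel $\mathcal{N}(\mathcal{A},\mathcal{B},W)^{\otimes n}$ equals the Holevo quantity of the ensemble $\{p(m),\Gamma(\mathcal{A}_m^{(n)},W)\}$, so Bob's $\mathcal{B}$ drops out of the optimisation; (v) divide by $n$, take $n\to\infty$, and identify the result with~\eqref{W_Holevo_joint}.

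The main obstacle I expect is step (i) combined with making precise that the optimisation over arbitrary $\{\mathcal{A}_j,\mathcal{B}_j\}$ across the $n$ copies really does collapse to optimisation over a \emph{single} fixed $\mathcal{A}$ together with a joint encoding state. Theorem~\ref{Theorem:asymptoticsemplification} is stated for an $A\rightarrow B$ protocol with $n$ uses of $W$, and it shows that the $n$ potentially-different $\mathcal{A}_j$ can be reproduced by one controlled CPTP map $\bar{\mathcal{A}}$ whose control register is part of an enlarged encoding state; but here we also have an outer regularisation index, so one has to be careful that "$n$ copies, each possibly with a different $\mathcal{A}_j$" is exactly the object whose Holevo quantity defines $\chi(W^{\otimes n})^{A\rightarrow B}$, and that the controlled-operation trick does not secretly change the channel being regularised. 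The cleanest way around this is to note that $\chi(W^{\otimes n})^{A\rightarrow B}$ by definition~\eqref{holevo_n_th_w} is already a maximum over a single pair $(\mathcal{A},\mathcal{B})$ of $\chi[\mathcal{N}(\mathcal{A},\mathcal{B},W)^{\otimes n}]$, so no inter-copy freedom needs to be collapsed at all — one simply applies Lemma~\ref{noBO} copy-by-copy and then the $\Gamma$-reduction of Theorem~\ref{Theorem:2} at the level of the $n$-fold tensor product, with $\rho_m^{(n)}$ now allowed to be entangled across the $n$ input wires. The remaining bookkeeping — that $\Gamma(\mathcal{A}_m^{(n)},W)$ is normalised and positive, and that the $\max$ over $\mathcal{A}$ and $\rho_m^{(n)}$ together reproduces the $\max$ over the joint map $\mathcal{A}_m^{(n)} = \rho_m^{(n)} * \mathcal{A}^{\otimes n}$ — is routine and parallels Theorem~\ref{Theorem:2} verbatim.
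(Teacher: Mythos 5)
Your proposal is correct and follows essentially the same route as the paper: apply Lemma~\ref{noBO} copy-by-copy, absorb the joint encoding state into Alice's $n$-fold operation to form $A_m^{(n)}=\rho_m^{(n)}*A^{\otimes n}$, reduce via $\Gamma$ to an ensemble of (possibly entangled) states on $\bigotimes_j B_I^j$ so that Bob's operation drops out, and regularise. Your observation that definition~\eqref{holevo_n_th_w} already fixes a single pair $(\mathcal{A},\mathcal{B})$, so no inter-copy collapse via Theorem~\ref{Theorem:asymptoticsemplification} is needed inside this proof, is exactly how the paper's argument is structured.
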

\begin{proof}

First, we apply Lemma ~\ref{noBO} to each copy of the channel $\mathcal{N}(\mathcal{A},\mathcal{B},W)$ and replace it with $\mathcal{N}(\mathcal{A},\mathcal{B},\, _{B^j_O}W)$. Then, in a joint encoding scheme, we apply $\otimes_j\mathcal{N}(\mathcal{A},\mathcal{B},\, _{B^j_O}W)$ to a (possibly entangled) joint state ${\rho}_m^{(n)}$. Combining this joint state with the $n$ copies of Alice's operation $\mathcal{A}$, we obtain a joint CPTP map $\mathcal{A}^{(n)}_m :  {\otimes}_j A^j_I \rightarrow {\otimes}_j A^j_O$, with Choi representation $A^{(n)}_m = \rho_m^{(n)}{*}A^{\otimes n}$. Plugging $A^{(n)}_m$ into the $n$ copies $_{B^j_O}W$, we get $A^{(n)}_m{*}\left({\otimes_j} _{B^j_O}W\right) = \mathbb{1}^{\otimes_{j=1}^n B_O^j}\otimes {\Gamma} ({\mathcal{A}}_m^{(n)}, W)$, where ${\Gamma} ({\mathcal{A}}_m^{(n)}, W)$ ${\in}\otimes_{j=1}^n B_I^j$ is a (possibly entangled) state, defined as 
\begin{align}
{\Gamma} ({\mathcal{A}}_m^{(n)}, W) =\frac{{A}_m^{(n)}*\big(\tr_{\otimes_{j=1}^n B_O^j}W^{\otimes{n}}\big)}{\Pi_{j=1}^n d_{B_O^j}}.
\end{align}
One can extend this setup to $n{\rightarrow}{\infty}$ and achieve a joint state at Bob's input Hilbert space $\otimes_{j=1}^\infty B_I^j$. Similar to Theorem ~\ref{Theorem:2}, we calculate the maximum amount of classical information encoded in the ensemble $\{p(m), {\Gamma} ({\mathcal{A}}_m^{(n)}, W )\}$ and regularise it to obtain the joint encoding capacity $C^E(W)^{A\rightarrow{B}}$ where the free parameters are of course, $p(m)$, ${\rho}_m^{(n)}$ and ${\mathcal{A}}$. Thus we obtain Eq.~\eqref{W_Holevo_joint}.
\end{proof}

\begin{corollary}
 The capacity $C^E(W)^{A\rightarrow{B}}$ is upper bounded by the logarithm of the dimension of Bob's input Hilbert space, i.e. $C^E(W)^{A\rightarrow{B}} \leq \log (d_{B_I})$.
\end{corollary}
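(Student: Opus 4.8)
The plan is to derive the corollary as an immediate consequence of the previous theorem by bounding the regularised Holevo-type expression in Eq.~\eqref{W_Holevo_joint} term by term. First I would recall that, by that theorem, for each finite $n$ the quantity inside the limit is the Holevo quantity of the ensemble $\{p(m),\,\Gamma(\mathcal{A}_m^{(n)},W)\}$ of states living entirely on Bob's input space $\bigotimes_{j=1}^n B_I^j$. Since the first term $S\!\left[\sum_m p(m)\,\Gamma(\mathcal{A}_m^{(n)},W)\right]$ is the von Neumann entropy of a density operator on that $\big(\prod_j d_{B_I^j}\big)$-dimensional space, it is bounded above by $\log\!\big(\prod_{j=1}^n d_{B_I^j}\big) = n\log(d_{B_I})$ (using that all $B_I^j$ are identified with a single space of dimension $d_{B_I}$, as stipulated after Definition~\ref{Def:A_to_B_code}), while the second term $\sum_m p(m)\,S\!\left[\Gamma(\mathcal{A}_m^{(n)},W)\right]$ is a convex combination of nonnegative entropies and is therefore $\geq 0$.

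Combining these two observations, the bracketed expression is at most $n\log(d_{B_I})$ for every $n$, so $\tfrac{1}{n}$ times it is at most $\log(d_{B_I})$; taking the maximum over $p(m)$, $\rho_m^{(n)}$, $\mathcal{A}$ and then the limit $n\to\infty$ preserves the bound, yielding $C^E(W)^{A\rightarrow B}\leq \log(d_{B_I})$. I would phrase this in one or two short displayed inequalities rather than spelling out every step, since each inequality is the standard entropy bound plus nonnegativity of entropy.

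The only point that needs a word of care — and the closest thing to an obstacle — is the identification of all the $B_I^j$ with a common space of dimension $d_{B_I}$, so that the dimension of $\bigotimes_j B_I^j$ is genuinely $d_{B_I}^{\,n}$ and the $\log$ scales linearly in $n$; this is exactly the embedding convention adopted right after Definition~\ref{Def:A_to_B_code}, so I would simply invoke it. Everything else is routine: the structural content has already been done in Theorem~\ref{Theorem:2}'s generalisation, which reduced the problem to an ensemble of states on Bob's input, and the corollary is just the observation that such an ensemble cannot carry more than $\log$ of its dimension.
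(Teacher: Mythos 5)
Your proposal is correct and follows essentially the same route as the paper: both reduce the corollary to the fact that the Holevo quantity of the ensemble $\{p(m),\Gamma(\mathcal{A}_m^{(n)},W)\}$ on $\bigotimes_{j=1}^n B_I^j$ is at most $n\log(d_{B_I})$, then regularise. You merely make explicit the two standard entropy bounds (entropy of the average state bounded by $\log$ of the dimension, nonnegativity of the average entropy) that the paper subsumes under a citation of Holevo's theorem, and you rightly flag the identification of the $B_I^j$ spaces, which the paper leaves implicit.
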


\begin{proof} This is the consequence of Holevo's theorem \cite{Holevo73}. The information content of the ensemble $\{p(m), \Gamma ({A}_m^{(n)}, W)\}$ cannot exceed logarithm of the dimension of $\Gamma ({A}_m^{(n)},W)$, i.e., $n\log (d_{B_I})$. Regularising this quantity proves the corollary. 
\end{proof}

Now we summarise our results. If we consider Alice's input message $m$ and Bob's output message $m'$, we can introduce a chain of inequalities for different classical capacities of the process $W$.

\begin{align} 
    I(m':m) &\leq C^{(1)}(W)^{A \rightarrow{B}} \leq C(W)^{A \rightarrow{B}} = \chi(W)^{A \rightarrow{B}} \nonumber \\
   & \leq C^E(W)^{A \rightarrow{B}}  \leq \log (d_{B_I}).
   \label{A_to_B}
\end{align}
One can write down a similar chain of inequalities for a communication protocol from Bob to Alice. Note that this inequality holds even for a process $W$ that contains shared entanglement between Alice's and Bob's input Hilbert spaces. This does not contradict the higher capacity achievable in an entanglement assisted communication protocol, such as super-dense coding \cite{superdense-coding}, because, when applying the inequalities in Eq.~\eqref{A_to_B}, one has to consider the total dimension of Bob's input Hilbert space, which consists of the part of the shared entangled state in Bob's possession and the quantum state that Alice communicates to him.

\section{Broadcast communication}\label{sec:5}
Having established the notion of one-way communication through a process, we proceed to explore scenarios where all communicating parties can transmit and receive information. 

\emph{Two-party communication:} Let us first consider the two party situation, where Alice (Bob) sends the message $m (k)$ and Bob (Alice) receives the message $m'(k')$. The possibility to violate causal inequalities indicates that indefinite causal order can indeed provide an advantage in some two party games \cite{oreshkov12, Branciard2016}; however, it is unclear if this advantage results in a communication enhancement. To address this question, it is necessary to find limits on two-way communication for causally separable processes. In this section we address this question.

There are at least two ways a process can be used as a resource for bidirectional communication, depending on whether Alice's and Bob's instruments are fixed or if they are chosen depending on the direction of communication attempted. In the first case, the parties produce a single probability distribution $P(m',k'|m,k)$ from the process, and one looks for communication in the marginals $P(m'|m)$, $P(k'|k)$. In the second case, the parties can generate different probability distributions depending on who is sending and who is receiving. The one-directional capacities for the first case are upper bounded by those in the second case, as the best instrument to receive a message might differ from the best to send a message. In this section, we will be mostly concerned with the second case.

Let us then consider a scenario where the order between Alice's and Bob's local operations is determined based on a random outcome. We represent a process where Alice can signal to Bob, but not the other way around, by $W^{A \prec B} =\, _{B_O}W^{A \prec B}$ and the reversed direction of signalling by $W^{B \prec A} {=}\, _{A_O}W^{B \prec A}$. The process matrix $W_{\mathrm{Sep}}$ in this case is a convex combination of $W^{B \prec A}$ and $W^{A \prec B}$ \cite{araujo15}: 
\begin{align}
 W_{\mathrm{sep}} = \lambda W^{B \prec A}+(1-\lambda)W^{A\prec B},
 \label{mixed}
\end{align}
where $0\leq \lambda \leq 1$ is the probability for Bob to be first.
We call such a process a \emph{causally separable process} \cite{oreshkov12}. We investigate a scenario where both Alice and Bob are trying to send information to each other through the background process $W_{\mathrm{Sep}}$. A reasonable attempt to quantify this bi-directional communication is to evaluate the sum of two one-shot capacities, $C^{(1)}(W)^{A\rightarrow{B}}$ and $C^{(1)}(W)^{B\rightarrow{A}}$.  We investigate this quantity and evaluate an operationally significant upper bound from the perspective of the classical capacity of the process.

\begin{theorem}
For a bi-directional communication protocol through a causally separable process, defined in Eq.~\eqref{mixed}, the following inequality holds: 

\begin{align}
&C^{(1)}(W_\mathrm{sep})^{A{\rightarrow}{B}}{+} C^{(1)}(W_\mathrm{sep})^{B{\rightarrow}{A}} \nonumber \\
&{\leq} \lambda \log {(d_{A_I})} {+} (1{-}\lambda) \log{(d_{B_I})}.   
\label{Eq:Bi-direction}
\end{align}
\end{theorem}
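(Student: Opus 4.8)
The plan is to bound each one-way capacity separately by an appropriate Holevo-type quantity and then exploit the convex-decomposition structure of $W_{\mathrm{sep}}$ together with concavity of von Neumann entropy. First I would recall that, by the chain of inequalities in Eq.~\eqref{A_to_B}, the one-shot capacity is bounded above by the product-encoding--joint-decoding capacity, which equals the process Holevo quantity: $C^{(1)}(W_\mathrm{sep})^{A\to B} \leq \chi(W_\mathrm{sep})^{A\to B}$, and symmetrically for $B\to A$. So it suffices to show $\chi(W_\mathrm{sep})^{A\to B} + \chi(W_\mathrm{sep})^{B\to A} \leq \lambda\log(d_{A_I}) + (1-\lambda)\log(d_{B_I})$. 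Actually, a sharper route is needed: I want to show $\chi(W_\mathrm{sep})^{A\to B} \leq (1-\lambda)\log(d_{B_I})$ and $\chi(W_\mathrm{sep})^{B\to A} \leq \lambda\log(d_{A_I})$, which then add to give the claim.

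The key step is to use Theorem~\ref{Theorem:2}: for the $A\to B$ direction, $\chi(W_\mathrm{sep})^{A\to B}$ is the maximum over $\{p(m),\rho_m,\mathcal{A}\}$ of the Holevo $\chi$ of the ensemble $\{p(m),\Gamma(A_m,W_\mathrm{sep})\}$ of states on $B_I$, where $\Gamma(A_m,W_\mathrm{sep}) = A_m * \tr_{B_O}W_\mathrm{sep}/d_{B_O}$. Since the link product and partial trace are linear in $W$, the decomposition $W_\mathrm{sep} = \lambda W^{B\prec A} + (1-\lambda)W^{A\prec B}$ gives $\Gamma(A_m,W_\mathrm{sep}) = \lambda\,\Gamma(A_m,W^{B\prec A}) + (1-\lambda)\,\Gamma(A_m,W^{A\prec B})$. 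Now the crucial observation: in the branch $W^{B\prec A}$ Alice cannot signal to Bob, so $\Gamma(A_m, W^{B\prec A})$ is \emph{independent of $m$} (and of $\mathcal{A}_m$) — one checks this from $W^{B\prec A} = {}_{A_O}W^{B\prec A}$, which makes $A_m * \tr_{B_O}W^{B\prec A}$ collapse to a fixed reduced state on $B_I$ by the same trace-and-replace computation as in Lemma~\ref{noBO}. Write this fixed state $\tau^{B_I}$. Then $\Gamma(A_m,W_\mathrm{sep}) = \lambda\tau + (1-\lambda)\sigma_m$ with $\sigma_m := \Gamma(A_m,W^{A\prec B})$.

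With this structure, I bound the Holevo quantity of the ensemble $\{p(m),\,\lambda\tau + (1-\lambda)\sigma_m\}$. The cleanest way: the Holevo $\chi$ of an ensemble is the mutual information $I(M:Q)$ of the classical-quantum state $\sum_m p(m)\ketbra{m}{m}\otimes(\lambda\tau+(1-\lambda)\sigma_m)$. Consider adjoining a classical flag recording which branch ($B\prec A$ with prob.\ $\lambda$, $A\prec B$ with prob.\ $1-\lambda$) was taken; since $M$ is independent of the flag, data processing / the standard mixing bound gives $\chi \leq \lambda\,\chi(\{p(m),\tau\}) + (1-\lambda)\,\chi(\{p(m),\sigma_m\}) = 0 + (1-\lambda)\,\chi(\{p(m),\sigma_m\}) \leq (1-\lambda)\log(d_{B_I})$, the last step being Holevo's bound since the $\sigma_m$ live on $B_I$. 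Taking the max over $p(m),\rho_m,\mathcal{A}$ yields $\chi(W_\mathrm{sep})^{A\to B}\leq(1-\lambda)\log(d_{B_I})$. The symmetric argument with the roles of Alice and Bob and $\lambda\leftrightarrow 1-\lambda$ swapped gives $\chi(W_\mathrm{sep})^{B\to A}\leq\lambda\log(d_{A_I})$; adding and using $C^{(1)}\leq\chi$ in each direction completes the proof.

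The main obstacle I anticipate is justifying rigorously that $\Gamma(A_m,W^{B\prec A})$ is genuinely $m$-independent for \emph{every} choice of Alice's encoding operation $\mathcal{A}_m$ — i.e.\ that the no-signalling constraint $W^{B\prec A} = {}_{A_O}W^{B\prec A}$ really does kill all dependence on $\mathcal{A}_m$ after linking and tracing out $B_O$ — and, relatedly, making the "mixing of ensembles with a classical flag" bound on the Holevo quantity airtight (it is the standard fact $\chi(\sum_i q_i \mathcal{E}_i) \leq \sum_i q_i\chi(\mathcal{E}_i) + H(\{q_i\})$ applied carefully, but here the mixing is at the level of states within each label $m$ with the \emph{same} weights $\lambda,1-\lambda$, so the $H(\{q_i\})$ term does not appear — this needs the observation that the flag is uncorrelated with $M$). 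Both points are essentially the computation in Lemma~\ref{noBO} plus a concavity/data-processing argument, so I expect them to go through cleanly once set up.
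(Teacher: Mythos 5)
Your proposal is correct, and it reaches the stated bound by a route that differs from the paper's in a meaningful way. The paper stays entirely at the classical level: it decomposes the conditional distribution as $P(m'|m)_{W_\mathrm{sep}} = \lambda P(m'|m)_{W^{B\prec A}} + (1-\lambda)P(m'|m)_{W^{A\prec B}}$, invokes convexity of the classical mutual information in the channel for fixed input distribution, notes that $I(m'{:}m)_{W^{B\prec A}}=0$ since Alice cannot signal to Bob in that branch, and bounds the surviving term by $\log(d_{B_I})$ via Eq.~\eqref{A_to_B}. You instead work at the quantum level: you pass to $\chi(W_\mathrm{sep})^{A\to B}$, use Theorem~\ref{Theorem:2} to reduce everything to the ensemble $\{p(m),\Gamma(A_m,W_\mathrm{sep})\}$ on $B_I$, exploit linearity of the link product to write $\Gamma(A_m,W_\mathrm{sep})=\lambda\tau+(1-\lambda)\sigma_m$ with $\tau$ an $m$-independent state (your verification via $W^{B\prec A}={}_{A_O}W^{B\prec A}$ and $\tr_{A_O}A_m=\mathbb{1}^{A_I}$ is exactly the Lemma~\ref{noBO} computation with the roles of the output spaces swapped, and it does go through), and then apply the flag-plus-data-processing bound $\chi\le\lambda\cdot 0+(1-\lambda)\chi(\{p(m),\sigma_m\})\le(1-\lambda)\log(d_{B_I})$; the $H(\{q_i\})$ term is indeed absent because the flag is uncorrelated with $m$. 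Both arguments share the same skeleton (convex decomposition, vanishing contribution of the non-signalling branch, dimension bound on the other branch), but yours buys something extra: since you bound $\chi(W_\mathrm{sep})^{A\to B}=C(W_\mathrm{sep})^{A\to B}$ directly, you obtain the stronger statement for the HSW (product-encoding, joint-decoding) capacity, which the paper only asserts as an extension in the closing remarks of the section by appealing to convexity of $\chi$ over mixtures of channels. The paper's version is more elementary, requiring nothing beyond classical convexity of mutual information.
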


\begin{proof}
Considering a fixed input probability distribution $P(a)$, the following linear relationship among marginal conditional probabilities holds for a causally separable process \cite{oreshkov12}.
\begin{align}
    P(a'|a)_{W_\mathrm{sep}} &= \lambda P(a'|a)_{W^{B \prec A}} + (1-\lambda) P(a'|a)_{W^{A \prec B}}.
\end{align}
With $a{\in}\{m,k\}$ and $a'{\in}\{m',k'\}$. Let us consider $A{\rightarrow}{B}$ communication. Consequently we can write:
\begin{align}
    &C^{(1)}(W_{Sep})^{A\rightarrow{B}}=\max I(m':m) \nonumber \\
    &\leq \max \left[\lambda I(m'{:}m)_{W^{B \prec A}} + (1-\lambda)  I(m'{:}m)_{W^{A \prec B}}\right]\nonumber \\
    &=(1-\lambda) \max I(m'{:}m)_{W^{A \prec B}} \nonumber \\
    &{\leq}(1-\lambda)\log(d_{B_I}).
    \label{Eq:A_to_B_bound}
\end{align}
Here, the first equation is due to Eqs.~\eqref{accessible_information}, and  ~\eqref{Eq:Capacity_w}. The maximisation is taken over Alice's and Bob's local operations, their message ensembles and the POVM operations. The first inequality is due to the fact that mutual information $I(a'{:}a)$ is a convex function of $p(a'|a)$ for a fixed input probability distribution $p(a)$ \cite{Cover06}. We obtain the second equality because, for a definite ordered scenario $B\prec A$, output $m'$ of Bob's local lab becomes independent of Alice's input. This makes $I(m{:}m')_{W^{B \prec A}} = 0$. The final inequality is due to Eq.~\eqref{A_to_B}. We apply a similar set of reasoning to obtain a bound for $B\rightarrow{A}$ communication to obtain, 
\begin{align}
    &C^{(1)}(W_{Sep})^{B\rightarrow{A}} \leq \lambda\log(d_{A_I}).
    \label{Eq:B_to_A_bound}
\end{align}
Adding  Eq.~\eqref{Eq:A_to_B_bound} and \eqref{Eq:B_to_A_bound}, we find

\begin{align}
C^{(1)}(W_\mathrm{sep})^{A{\rightarrow}{B}}&{+} C^{(1)}(W_\mathrm{sep})^{B{\rightarrow}{A}} \nonumber \\
& \leq \lambda \log(d_{A_I}) + (1-\lambda) \log(d_{B_I}).
\label{bipartitebound}
\end{align}

\end{proof}

For the particular case $d_{A_I}{=}d_{B_I}{=}d$, we see that the sum of two one-shot capacities is upper bounded by $\log (d)$. In other words, the total communication in causally separable processes can be no more than maximal one-way communication. We note that a weaker version of this inequality holds for the scenario where the parties' instruments are fixed regardless of the attempted direction of communication. In this case, the single-shot capacities coincide with the mutual information obtained from a single conditional probability distribution $P(m',k'|m,k)$, resulting in the inequality $I(m'{:}m) + I(k'{:}k) {\leq} \log (d)$. This is an example of an \emph{entropic causal inequality}, first considered in Ref.~\cite{Miklin_2017}. Remarkably, no violation of this inequality is known, and our own numerical search also did not reveal any violation of Eq.~\eqref{Eq:Bi-direction}. This suggests that the bound on the total bidirectional communication might hold for general processes. 

We note that the bound we established applies to all the quantum processes for which a physical interpretation is known. For example, in a process with coherent control of causal order, such as the quantum switch \cite{chiribella09b}, tracing out the control leads to a separable bipartite process, to which the bound applies. More generally, it has been shown that any bipartite processes that admit a unitary extension is causally separable \cite{Barrett2020, Yokojima2020}.

 \emph{Multi-party communication:} The above-mentioned protocol can be extended to multiple parties. In that case, each party tries to communicate his/her information to the remaining parties. Similarly to above, we consider a process for $N$ parties, $A^{(1)}, A^{(2)}, ... , A^{(N)}$ that can be written as a probabilistic mixture of permutations of different causal order:
\begin{align}
    W^N_\mathrm{sep} = \sum _\sigma {q_{\sigma}} W^{\sigma}. 
\end{align}
Here, $\sigma$ denotes the different permutations of the communicating parties and $q_{\sigma}$ denotes the probability of occurrence of each permutation. Although this is not the most general process with definite causal order \cite{oreshkov15, Wechs2019}, it is one of particular interest, as it represents a scenario where the order among parties can be set by external, random, variables, but is independent of the parties' actions.

Motivated by the previous section, we intend to find an upper bound to the quantity $\sum _{i,j}C^{(1)}(W^N_\mathrm{sep})^{i{\rightarrow}{j}}$.  Here $i{\rightarrow}{j}$ refers to signalling from the party $A^{(i)}$ to the party $A^{(j)}$.

\begin{theorem} 
If dimensions of all the input Hilbert spaces of the communicating parties are equal ($d$), then  \begin{align}
 \sum _{i,j}C^{(1)}(W^N_\mathrm{sep})^{i{\rightarrow}{j}} {\leq} \frac{N(N-1)}{2} \log (d).   
\end{align}
\end{theorem}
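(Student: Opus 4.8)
The plan is to reduce the $N$-party case to the two-party Theorem by decomposing the multi-party separable process into its causally ordered components and then, for each ordered component $W^{\sigma}$, bounding the pairwise communications that are compatible with that order. First I would fix, for each ordered pair $(i,j)$, an input distribution and the relevant local operations that optimise $C^{(1)}(W^N_\mathrm{sep})^{i\rightarrow j}$, and write the marginal conditional probability $P(a^{(j)\prime}|a^{(i)})$ as the convex mixture $\sum_\sigma q_\sigma P(a^{(j)\prime}|a^{(i)})_{W^\sigma}$ over the orderings. As in Eq.~\eqref{Eq:A_to_B_bound}, convexity of mutual information in the channel (for fixed input) gives
\begin{align}
C^{(1)}(W^N_\mathrm{sep})^{i\rightarrow j} \leq \sum_\sigma q_\sigma \max I(a^{(j)\prime}{:}a^{(i)})_{W^\sigma}.
\end{align}
For a permutation $\sigma$ in which $A^{(j)}$ precedes $A^{(i)}$, the output of $A^{(j)}$ is independent of the input of $A^{(i)}$, so that term vanishes; only orderings with $i$ before $j$ contribute, and for those, by Eq.~\eqref{A_to_B} applied to the induced one-way channel, $\max I(a^{(j)\prime}{:}a^{(i)})_{W^\sigma}\leq \log d$.

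Next I would sum over all ordered pairs $(i,j)$ with $i\neq j$ and exchange the order of summation:
\begin{align}
\sum_{i\neq j}C^{(1)}(W^N_\mathrm{sep})^{i\rightarrow j}
&\leq \sum_{i\neq j}\sum_{\sigma:\,i\prec_\sigma j} q_\sigma \log d \nonumber\\
&= \log d \sum_\sigma q_\sigma \,\bigl|\{(i,j): i\prec_\sigma j\}\bigr|.
\end{align}
For any fixed total order $\sigma$ on $N$ parties, the number of ordered pairs $(i,j)$ with $i$ before $j$ is exactly $\binom{N}{2}=N(N-1)/2$. Since $\sum_\sigma q_\sigma=1$, the right-hand side is $\frac{N(N-1)}{2}\log d$, which is the claimed bound. (Here I am tacitly reading the sum $\sum_{i,j}$ in the statement as over distinct ordered pairs, consistent with the two-party case where it reduces to $C^{(1),A\rightarrow B}+C^{(1),B\rightarrow A}$.)

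The main subtlety — the analogue of the obstacle already handled in the bipartite proof — is justifying that Eq.~\eqref{A_to_B} really applies verbatim to each ordered component. One must argue that $W^\sigma$, restricted to the signalling from $A^{(i)}$ to $A^{(j)}$ with all intermediate and non-participating parties absorbed into a combined channel, still satisfies the process-matrix constraints defining a one-way resource, so that Theorem~\ref{Theorem:2} and its corollary $\chi(W)^{A\rightarrow B}\leq\log(d_{B_I})$ go through; this is where the definite-order structure of $W^\sigma$ (each party only has access to outputs of earlier parties) is essential, and it is the step I would write most carefully. The optimisation over local operations and the convexity argument are then routine, mirroring the bipartite theorem line for line. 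Finally, one notes that, exactly as remarked after Eq.~\eqref{bipartitebound}, the case of fixed (direction-independent) instruments yields the same bound for the single mutual informations $\sum_{i\neq j} I(a^{(j)\prime}{:}a^{(i)})$, an entropic causal inequality for $N$ parties.
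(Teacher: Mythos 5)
Your proposal is correct and follows essentially the same route as the paper: decompose the marginal conditional probabilities over the causal orderings $\sigma$, use convexity of mutual information, drop the terms where the receiver precedes the sender, bound each surviving term by $\log(d)$ via the one-way result, and count that each fixed ordering contributes exactly $\binom{N}{2}$ signalling pairs. Your explicit exchange of the sums over $(i,j)$ and $\sigma$ is in fact a slightly cleaner presentation of the paper's counting argument, and the caveat you flag about applying the one-way bound to each ordered component is the same step the paper handles implicitly via Lemma~\ref{noBO} and Eq.~\eqref{A_to_B}.
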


\begin{proof}  we can write the conditional probability $P(\vec{m}'|\vec{m}) = \sum _\sigma {q_{\sigma} P_\sigma (\vec{m}'|\vec{m})}$. We can write down the marginals $P(m'_j|m_i) = \sum _\sigma {q_{\sigma} P_\sigma (m'_j|m_i)}$ $\forall {i,j}$. By the convexity of mutual information and the inequalities introduced in Eq.~\eqref{A_to_B}:
\begin{align} \label{networkinfo}
 C^{(1)}(W^N_\mathrm{sep})^{i{\rightarrow}{j}} \leq \mathrm{max}&\sum _\sigma {q_{\sigma} I_\sigma (m'_j:m_i)}\nonumber \\
 = \mathrm{max}&\sum _{\forall{\{i,j\}}\,{|}\,\sigma(i) \prec \sigma(j)} {q_{\sigma} I_\sigma (m'_j:m_i)} \nonumber \\
 \leq &\sum _{\forall{\{i,j\}}\,{|}\,\sigma(i) \prec \sigma(j)} {q_{\sigma} \log (d_{A^{j}_I})} \nonumber \\
= &\sum _{\forall{\{i,j\}}\,{|}\,\sigma(i) \prec \sigma(j)} {q_{\sigma} \log (d)}
\end{align}
The maximisation is taken over all communicating parties' local operations, their message ensembles and the POVM operations. The first inequality is due to the convexity of mutual information relative to mixtures of conditional probabilities (as in the bipartite case). The first equality follows from the fact that if the party $\sigma(j)$ is in the causal past of the party $\sigma(i)$, then $I(m'_j{:}m_i) = 0$. $d_{A^{j}_I}$ is the dimension of the input Hilbert space of the party $A^{j}$. The second equality follows because of our assumption of all the dimensions of the input Hilbert spaces being equal.  Now, it is easy to see that the $n$-th party has total $n{-}1$ parties in his/her causal past. Therefore, considering each party trying to communicate with the remaining $N{-}1$ parties, the total number of available channels are $N(N{-}1) - \sum _{n{=}1}^{N} (n{-}1) = N(N{-}1)/2$. This results in
\begin{align}
    \sum _{i,j}C^{(1)}(W^N_\mathrm{sep})^{i{\rightarrow}{j}} \leq \frac{N(N-1)}{2} \log (d).
\end{align}
\end{proof}

The key property that leads to the above bounds is the convexity of mutual information under probabilistic mixtures of classical channels. With this in mind, we see that the above results can be extended directly to the product encoding, joint decoding setting, replacing the one-shot capacity $C^{(1)}$ with the HSW capacity $C$. Indeed, we have seen that $C$ is given by the (maximised) Holevo quantity $\chi$ of the one-way channel generated by a process and, just like mutual information, $\chi$ is convex over the probabilistic mixture of channels. It remains an open question whether higher total transmission rates can be achieved in a joint encoding setting.

\section{Conclusion}\label{sec:conclusion}

We have formalised classical communication through a general quantum causal structure, which may be probabilistic or indefinite. We have defined the Holevo quantity as well as different classical capacities for an arbitrary process and established relationships among them. We have found that, for one-way communication, the various capacities can be reduced to those of ordinary channels, up to an optimisation over the operations performed in local laboratories. We have further shown that, for one-way communication, the classical capacity of a process cannot exceed the Holevo bound---at most one classical bit per received qubit---even in case of indefinite causal order. 

Next, we have quantified bi-directional and more generally broadcast communication protocols for processes with definite but classically uncertain causal order. We have demonstrated that the total amount of communication between two parties cannot exceed the maximal one-way capacity in a fixed causal order, with a similar bound extending to multipartite broadcast communication. One can ask whether a process with an indefinite causal structure can violate these bounds. We have answered this negatively for coherent control of causal order, as in the quantum switch \cite{chiribella09b}. It is an open question whether a more general process can violate the bounds. As we have not found any violation, it is an interesting possibility that the bounds we have found might constitute a universal limit to the total communication possible in any process.

\section*{Acknowledgement} KG thanks the organisers of the Quantum Information Structure of Spacetime 2020 workshop for giving an opportunity to present this talk. We are grateful to Alastair Abbott, Ding Jia, Nitica Sakharwade, Marco Tomamichel, and Magdalena Zych and for helpful discussions.  This work has been supported by the Australian Research Council (ARC) by Centre of Excellence for Engineered Quantum Systems (EQUS, CE170100009), F.C.\ acknowledges support through an Australian Research Council Discovery Early Career Researcher Award (DE170100712). KG is supported by the RTP scholarship from the University of Queensland. We acknowledge the traditional owners of the land on which the University of Queensland is situated, the Turrbal and Jagera people.

\bibliography{Bidirection.bib}

\end{document}